\def\F{\mathbb{F}}
\def\E{\mathbb{F}_{q_1}}
\def\Tr{\text{\rm Tr}}
\def\Ker{\text{\rm Ker}}
\def\a{\textrm{a}}
\newtheorem{theorem}{Theorem}[section]
\newtheorem{lemma}[theorem]{Lemma}
\newtheorem{example}[theorem]{Example}
\newtheorem{corollary}[theorem]{Corollary}
\newtheorem{proposition}[theorem]{Proposition}
\newtheorem{definition}[theorem]{Definition}
\numberwithin{equation}{section}
\title[Complete weight enumerators and weight hierarchies]{Complete weight enumerators and weight hierarchies for linear codes from quadratic forms }
\author{Xiumei Li}
\address{School of Mathematical Sciences, Qufu Normal University, Qufu, 273165, China}
\email{lxiumei2013@qfnu.edu.cn}
\author{Xiaotong Sun}
    \address{School of Mathematical Sciences, Qufu Normal University, Qufu  273165, China.}\email{1905654165@qq.com}
    \author{Min Sha}
\address{\parbox{\linewidth}{School of Mathematical Sciences, South China Normal University, Guangzhou, 510631, China \\ 
Key Laboratory of Applied Mathematics (Putian University), Fujian Province University, Fujian Putian, 351100, China}}
\email{min.sha@m.scnu.edu.cn}
\keywords{~Linear code; Quadratic form; Complete weight enumerator; Weight hierarchy; Generalized Hamming weight.}
\subjclass[2010]{Primary: 94B05, 11T71}
\date{\today}
\begin{document}
\maketitle
\begin{abstract}
In this paper, for an odd prime power $q$, we extend the construction of Xie et al. \cite{XOYM2023} to propose two classes of linear codes $\mathcal{C}_{Q}$ and $\mathcal{C}_{Q}'$ over the finite field $\mathbb{F}_{q}$ with at most four nonzero weights. These codes are derived from quadratic forms through a bivariate construction. We completely determine their complete weight enumerators and weight hierarchies by employing exponential sums. Most of these codes are minimal and some are optimal in the sense that they meet the Griesmer bound. Furthermore, we also establish the weight hierarchies of $\mathcal{C}_{Q,N}$ and $\mathcal{C}_{Q,N}'$, which are the descended codes of $\mathcal{C}_{Q}$ and $\mathcal{C}_{Q}'$.
\end{abstract}

\section{Introduction}

\label{intro}


Throughout this paper, let $ \mathbb{F}_{q^s} $ be the finite field with size $ q^{s}$, where $q = p^m$ is an odd prime power and $s, m$ are two positive integers. Denote by $\mathbb{F}_{q^s}^{*}$ the set of the nonzero elements of $ \mathbb{F}_{q^s}$ and by $\Tr_{q^{s}/q}$ the trace function from $\mathbb{F}_{q^s}$ onto $\mathbb{F}_q$. A $k$-dimensional subspace $\mathcal{C}$ of $ \mathbb{F}_{q}^{n} $ over $\F_q$ is called an $[n,k,d]_q$ linear code of length $n$ with  minimum (Hamming) distance $d$. For $i \in \{1,2,\cdots,n\}$, let $A_{i}$ be the number of codewords in $\mathcal{C}$ with Hamming weight $i$. The sequence $(1,A_{1},\cdots,A_{n})$ is referred to as weight distribution of $\mathcal{C}$.
If there are $t$ nonzero elements in the sequence $(A_{1},A_{2},\cdots,A_{n})$, then the code $\mathcal{C}$  is said to be a $t$-weight code.

The weight distribution and complete weight enumerator are important parameters for a linear code \cite{T2007}. In particular, few-weight linear codes have broad applications in areas such as authentication codes, association schemes, secret sharing schemes, strongly regular graphs, and combinatorial designs \cite{CG1984,CK1986,DHKW2007,O2018,YD2006}. Therefore, determining the complete weight enumerators of linear codes remains a central research topic in coding theory.

Motivated by applications to cryptography, the concept of generalized Hamming weight (GHW) was initially introduced by Helleseth et al. \cite{HKM77,K78}.  We recall the definition of the generalized Hamming weights of linear codes \cite{WJ91}. For an $[n,k,d]_q$ linear code $\mathcal{C}$, 
let
$ [\mathcal{C},r]_{q} $ be the set of all the $r$-dimensional subspaces of $\mathcal{C}$, where $ 1\leq r\leq k$.
For a subspace $ K \in [\mathcal{C},r]_{q}$, the support of $K$ is defined by
\[ \textrm{Supp}(K)=\Big\{i:1\leq i\leq n, c_i\neq 0 \ \ \textrm{for some $c=(c_{1}, c_{2}, \cdots , c_{n})\in K$}\Big\}.\]
The $r$-th generalized Hamming weight of $\mathcal{C}$ is defined by
\[
d_{r}(\mathcal{C})=\min \Big\{\#\textrm{Supp}(K)| K \in [\mathcal{C}, r]_{q} \Big\}, \ 1 \leq r \leq k.
\]
The sequence $ \{d_{1}(\mathcal{C}),d_{2}(\mathcal{C}),\cdots,d_{k}(\mathcal{C})\}$ is called the weight hierarchy of $\mathcal{C}$. 

The weight hierarchy of a linear code is not only theoretically significant but also find applications in wire-tap channels, trellis complexity analysis, and network coding \cite{CGHFRS85,F94,KTFL93,TV95,WJ91}. While the weight hierarchies of several classical code families (e.g., Reed-Muller codes, cyclic codes, and algebraic geometry codes) have been fully characterized \cite{B19,BLV14,CC97,HP98,JL97,WZ94,WW97,WJ91,XLG16,YL15}, determining the complete weight hierarchies for general linear codes remains highly challenging. To date, only a limited number of results are known for codes constructed via defining sets \cite{HLL24,JF17,LF18,LF21,LL21,LL22dcc,LL22dm,LZW23,LW19}.

Xie et al. \cite{XOYM2023} constructed two classes of $\mathbb{F}_{q}$-linear codes $\mathcal{C}_{Q}$ and $\mathcal{C}_{Q}'$ as follows: 
\begin{linenomath}\begin{align*}\mathcal{C}_{Q} = \{(a Q(x) - \Tr_{q^{s}/q}(b x))_{x \in \F_{q^{s}}^*}: (a, b) \in \F_q \times \F_{q^{s}}\},\\
 \mathcal{C}_{Q}' = \{(a Q(x) - \Tr_{q^{s}/q}(b x) + c)_{x \in \F_{q^{s}}}: (a, b, c) \in \F_q \times \F_{q^{s}} \times \F_q\}, \end{align*}\end{linenomath}
where $Q$ is an arbitrary quadratic form on $\mathbb{F}_{q^{s}}/\mathbb{F}_{q}$. They demonstrated that these codes are two-, three-, or four-weight codes and include optimal codes attaining both the Griesmer and Singleton bounds. Furthermore, they introduced a novel technique to descend an $\mathbb{F}_{q}$-code to an $\mathbb{F}_{p}$-code, whose weight enumerator is uniquely determined by the original code. For analogous constructions based on specific quadratic forms, we refer readers to \cite{CDY05,DLLZ16,FL07,LLQ09,YYZ17,YCD06}.

Inspired by \cite{TXF2017,XOYM2023}, this paper defines two new classes of $\mathbb{F}_{q}$-linear codes $\mathcal{C}_{Q}$ and $\mathcal{C}_{Q}'$ as follows:   
\begin{linenomath}\begin{align*}\mathcal{C}_{Q} = \{(a Q(x) + \Tr_{q^{m_{2}}/q}(b y))_{(x,y) \in \F^\star}: (a, b) \in \F_q \times \F_{q^{m_2}}\},\\
 \mathcal{C}_{Q}' = \{(a Q(x) + \Tr_{q^{m_{2}}/q}(b y) + c)_{(x,y) \in \F}: (a, b, c) \in \F_q \times \F_{q^{m_2}} \times \F_q\}, \end{align*}\end{linenomath}
where $Q$ is an arbitrary quadratic form on $\mathbb{F}_{q^{m_1}}/\mathbb{F}_{q}$ and $m_{1}, m_{2}$ are arbitrary positive integers and $\F^\star  = \left(\F_{q^{m_1}} \times \F_{q^{m_2}}\right)\Big\backslash \{(0, 0)\}$.
The main contribution of this paper is the complete determination of the weight enumerators and weight hierarchies of these codes through exponential sum. This approach fundamentally differs from Xie et al.'s methods.

The organization of this paper is as follows: Section 2 introduces the essential notations and reviews fundamental properties of quadratic forms over finite fields. Section 3 determines the parameters of the proposed linear codes $\mathcal{C}_{Q}$ and $\mathcal{C}_{Q}'$, explicitly characterizes their weight enumerators, and fully establishes their weight hierarchies. Section 4 investigates the descent operation from $\mathcal{C}_{Q}$ and $\mathcal{C}_{Q}'$ to $\mathcal{C}_{Q,N}$ and $\mathcal{C}_{Q,N}'$ over prime fields, and derives the weight hierarchies of the resulting codes. Section 5 concludes the paper with a summary of contributions.

\section{Preliminaries}

\subsection{Notations and conventions}

Throughout this paper, we use the following notations:
\begin{itemize}
    \item $p$ is an odd prime and $q = p^m$.
    \item $m_1, m_2$ are positive integers, and $M = m_1 + m_2$.
    \item $q_i = q^{m_i}$ for $i=1,2$, $\mathbb{F} = \mathbb{F}_{q_1} \times \mathbb{F}_{q_2}$, and $ \mathbb{F}^\star = \mathbb{F} \setminus \{(0,0)\}$.
    \item $\zeta_p = \exp\left(\frac{2\pi \mathrm{i}}{p}\right)$ is a primitive $p$-th root of unity.
    \item \ $p^* = (-1)^{(p-1)/2} p$.
    \item $\eta$ denotes the quadratic character of $\mathbb{F}_q$, with $\eta(0) = 0$.
    \item $\upsilon: \mathbb{F}_q \to \mathbb{Z}$ is the function defined by $\upsilon(x) = q - 1$ if $x = 0$, and $ \upsilon(x) = -1$ otherwise.
    \item $\langle \alpha_1, \dots, \alpha_r \rangle$ denotes the $\mathbb{F}_q$-linear subspace spanned by $ \alpha_1, \dots, \alpha_r$.
\end{itemize}

\begin{lemma}[{\cite[Lemma 2.2]{LCL24}}]\label{lem:basic}
    For any $b \in \mathbb{F}_q$,
    \[
    \sum\limits_{z \in \mathbb{F}_q^*} \eta(z)^k \zeta_p^{\Tr_{q/p}(zb)} = 
    \begin{cases}
        \upsilon(b), & \text{if } k \text{ is even}, \\
        (-1)^{m-1} \eta(-b) q (p^*)^{-\frac{m}{2}}, & \text{if } k \text{ is odd}.
    \end{cases}
    \]
\end{lemma}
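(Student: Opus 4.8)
The statement to prove is Lemma~\ref{lem:basic}, a character-sum identity attributed to \cite{LCL24}. Let me sketch how I would prove it.

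The plan is to evaluate the sum $S_k(b) = \sum_{z \in \mathbb{F}_q^*} \eta(z)^k \zeta_p^{\Tr_{q/p}(zb)}$ by splitting on the parity of $k$. Since $\eta(z)^k$ depends only on whether $k$ is even or odd (as $\eta(z) = \pm 1$ for $z \in \mathbb{F}_q^*$), it suffices to handle the two cases $k = 0$ and $k = 1$.

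For $k$ even, $\eta(z)^k = 1$ for all $z \in \mathbb{F}_q^*$, so $S_k(b) = \sum_{z \in \mathbb{F}_q^*} \zeta_p^{\Tr_{q/p}(zb)}$. If $b = 0$ this is obviously $q - 1 = \upsilon(0)$. If $b \neq 0$, the map $z \mapsto zb$ permutes $\mathbb{F}_q^*$, so the sum equals $\sum_{w \in \mathbb{F}_q^*} \zeta_p^{\Tr_{q/p}(w)} = \left(\sum_{w \in \mathbb{F}_q} \zeta_p^{\Tr_{q/p}(w)}\right) - 1 = 0 - 1 = -1 = \upsilon(b)$, using the standard orthogonality of the additive canonical character of $\mathbb{F}_q$.

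For $k$ odd, $\eta(z)^k = \eta(z)$, so $S_k(b) = \sum_{z \in \mathbb{F}_q^*} \eta(z) \zeta_p^{\Tr_{q/p}(zb)}$; this is (essentially) a Gauss sum twisted by $b$. If $b = 0$ it vanishes since $\eta$ is a nontrivial multiplicative character summing to zero over $\mathbb{F}_q^*$. If $b \neq 0$, substituting $z \mapsto z/b$ (legitimate since $b$ is invertible) and using $\eta(z/b) = \eta(z)\eta(b) = \eta(z)\eta(b^{-1})$ gives $S_k(b) = \eta(b^{-1}) G$, where $G = \sum_{z \in \mathbb{F}_q^*} \eta(z)\zeta_p^{\Tr_{q/p}(z)}$ is the quadratic Gauss sum over $\mathbb{F}_q$. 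Since $\eta(b^{-1}) = \eta(b) = \eta(-1)\eta(-b)$, I would then invoke the classical evaluation of the quadratic Gauss sum over $\mathbb{F}_q = \mathbb{F}_{p^m}$: $G = (-1)^{m-1} (\sqrt{p^*})^m$ with $p^* = (-1)^{(p-1)/2}p$ (this is the standard formula, e.g.\ from Lidl--Niederreiter), and combine $\eta(-1) = (-1)^{(p-1)m/2}$ with the factor $(\sqrt{p^*})^m$ to rewrite everything as $(-1)^{m-1}\eta(-b)\, q\, (p^*)^{-m/2}$, after checking that $q \cdot (p^*)^{-m/2} = p^m (p^*)^{-m/2}$ matches $\eta(-1)(\sqrt{p^*})^m$ up to the stated sign — a routine bookkeeping of powers of $\sqrt{p^*}$ and signs.

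The only genuine obstacle is the bookkeeping in the odd case: reconciling the various standard normalizations of the quadratic Gauss sum over $\mathbb{F}_{p^m}$ (lifting from $\mathbb{F}_p$ via the Davenport--Hasse relation, keeping track of $\eta(-1)$, and verifying that $p^m(p^*)^{-m/2}$ together with the sign $(-1)^{m-1}$ equals $\eta(-1)(\sqrt{p^*})^m$). Everything else is an immediate application of character orthogonality. Since this is quoted as a known lemma, it would be enough to reduce to the Gauss sum evaluation and cite \cite{LCL24} or a standard reference for the final identity.
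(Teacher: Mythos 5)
Your proposal is correct; the paper gives no proof of this lemma (it is quoted directly from \cite[Lemma 2.2]{LCL24}), and your argument is exactly the standard one behind it: character orthogonality for $k$ even, and for $k$ odd the substitution $z\mapsto z b^{-1}$ reducing to the quadratic Gauss sum $G(\eta)$ together with the Lidl--Niederreiter evaluation $G(\eta)=(-1)^{m-1}(p^*)^{m/2}$. The bookkeeping you flagged does close: since $\eta(-1)=(-1)^{(p-1)m/2}$ and $(p^*)^m=(-1)^{(p-1)m/2}p^m$, one has $\eta(-1)\,q\,(p^*)^{-m/2}=(p^*)^{m/2}$, hence $\eta(b)G(\eta)=(-1)^{m-1}\eta(-b)\,q\,(p^*)^{-m/2}$ as claimed (using $\eta(b)=\eta(-1)\eta(-b)$; the $b=0$ case of the odd formula holds trivially because $\eta(0)=0$).
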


\subsection{Quadratic forms over finite fields}

The general references for this subsection are \cite{ChenMes2023, HK2006, TXF2017, XOYM2023}.

\setlength{\parindent}{2em} A quadratic form $Q$ on $\E/\mathbb{F}_{q}$ is a map $\E \to \F_q$ such that $Q(ax) = a^2 Q(x)$ for all $a \in \F_q$ and \begin{linenomath}\begin{equation*}\label{eq:F}
    B_{Q}(x,y) = \frac{1}{2} \Big(Q(x + y) - Q(x) - Q(y)\Big), \textrm{for any}\  x,y\in\E
\end{equation*}\end{linenomath} is a symmetric bilinear form. 
Taking a basis $\upsilon_{1}, \upsilon_{2}, \cdots, \upsilon_{m_1}$ of $\E$, there is a symmetric matrix $ A = (B_{Q}(\upsilon_i,\upsilon_j))_{m_1\times m_1}$ over $\mathbb{F}_{q}$ representing $Q(x)$ such that 
\begin{linenomath}\begin{align*}
Q(x) = Q(\overline{x}) = Q(x_{1}, x_{2}, \cdots, x_{m_1}) & = \sum\limits_{1\leq i, j\leq m_1} B_{Q}(v_i,v_j) x_{i} x_{j}
= \overline{x} A \overline{x}^{\mathrm{T}},
\end{align*}\end{linenomath}
where $\overline{x} = (x_1,x_2,\cdots,x_{m_1})$ is the coordinate vector of $x$ under the basis $\upsilon_{1}, \upsilon_{2},\cdots,\upsilon_{m_1}$ of $\E$ and $\overline{x}^{\mathrm{T}}$ is the transposition of $\overline{x}$. The rank $r_{Q}$ of $Q$ is defined as the codimension of $\F_q$-vector space 
\[\E^{\perp_Q}=\{ x \in \E | B_{Q}(x,y) = 0, \textrm{for all } y \in \E\},\] 
that is $r_{Q} = m_1 - \operatorname{dim}(\E^{\perp_Q})$. If $r_{Q} = m_1$, then $Q$ is called non-degenerate, otherwise $Q$ is called degenerate. It's known that every quadratic form can be diagonalized, i.e., there exists an invertible matrix $M$ over $\F_q$ such that
\[
M A M^{\mathrm{T}} = \operatorname{diag}(\lambda_{1},\lambda_{2},\cdots,\lambda_{r_{Q}},0,\cdots,0)
\]
is a diagonal matrix, where $\lambda_{1},\lambda_{2},\cdots,\lambda_{r_{Q}} \in \F_q^*$.
Let $\Delta_{Q} = \lambda_{1}\lambda_{2}\cdots\lambda_{r_{Q}}$ if $r_{Q} \neq 0$ and $\Delta_{Q} = 1$, otherwise. 
The sign $\varepsilon_{Q}$ of $Q$ is defined as $\eta(\Delta_{Q})$. For simplicity, we define $\epsilon = \varepsilon_{Q} (- 1)^{\frac{(p - 1) m r_{Q}}{4}}$ if $r_{Q}$ is even; and $\epsilon = \varepsilon_{Q} (- 1)^{\frac{(p - 1) m (r_{Q} + 1)}{4}}$ if $r_{Q}$ is odd. 

\begin{lemma}[{\cite[Lemma 2.4]{LCL24}, \cite[Lemma 5]{TXF2017}}]\label{lem:main1} 
  Let $Q(x)$ be a quadratic form on $\E/\mathbb{F}_{q}$. Then for $z \in \F_q^*$, we have 
\[\sum\limits_{x\in \mathbb{F}_{q_1}} \zeta_p^{\mathrm{Tr}_{q/p}\left(z Q(x)\right)} = \begin{cases}
		\epsilon q^{m_1 - \frac{r_{Q}}{2}}, & 2 \mid r_{Q}, \\
	(-1)^{m-1} \eta(- z) \varepsilon_Q q^{m_1}(p^*)^{-\frac{m r_{Q}}{2}}, & 2 \nmid r_{Q}.\end{cases}\]
\end{lemma}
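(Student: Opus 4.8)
The plan is to reduce this multivariate exponential sum to a product of one-dimensional quadratic Gauss sums and then apply Lemma~\ref{lem:basic}. First I would use the diagonalization of $Q$: after an invertible $\mathbb{F}_q$-linear change of coordinates on $\mathbb{F}_{q_1}\cong\mathbb{F}_q^{m_1}$ (which, being a bijection, leaves the sum unchanged), $Q$ becomes $\overline{x}\mapsto\sum_{i=1}^{r_Q}\lambda_i x_i^2$ with $\lambda_1,\dots,\lambda_{r_Q}\in\mathbb{F}_q^*$ and $\prod_{i=1}^{r_Q}\lambda_i=\Delta_Q$. The resulting sum factors over the $m_1$ coordinates, and the $m_1-r_Q$ coordinates not appearing in $Q$ each contribute a factor $q$, so
\[
\sum_{x\in\mathbb{F}_{q_1}}\zeta_p^{\mathrm{Tr}_{q/p}(zQ(x))}
= q^{m_1-r_Q}\prod_{i=1}^{r_Q}\Bigl(\sum_{t\in\mathbb{F}_q}\zeta_p^{\mathrm{Tr}_{q/p}(z\lambda_i t^2)}\Bigr).
\]

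Next I would evaluate each inner sum. Using that $t\mapsto t^2$ hits $0$ once and each nonzero square exactly twice, i.e.\ $\#\{t\in\mathbb{F}_q:t^2=u\}=1+\eta(u)$ for $u\neq 0$, for any $c\in\mathbb{F}_q^*$ one has
\[
\sum_{t\in\mathbb{F}_q}\zeta_p^{\mathrm{Tr}_{q/p}(ct^2)}
=\sum_{u\in\mathbb{F}_q}\zeta_p^{\mathrm{Tr}_{q/p}(cu)}+\sum_{u\in\mathbb{F}_q^*}\eta(u)\zeta_p^{\mathrm{Tr}_{q/p}(cu)}
=\sum_{u\in\mathbb{F}_q^*}\eta(u)\zeta_p^{\mathrm{Tr}_{q/p}(cu)},
\]
where the first sum vanishes because $c\neq 0$. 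Lemma~\ref{lem:basic} with $k$ odd and $b=c$ then gives $\sum_{t\in\mathbb{F}_q}\zeta_p^{\mathrm{Tr}_{q/p}(ct^2)}=(-1)^{m-1}\eta(-c)q(p^*)^{-m/2}$. Substituting $c=z\lambda_i$, multiplying, and using $\prod_{i=1}^{r_Q}\eta(-z\lambda_i)=\eta(-1)^{r_Q}\eta(z)^{r_Q}\eta(\Delta_Q)=\eta(-1)^{r_Q}\eta(z)^{r_Q}\varepsilon_Q$, I obtain
\[
\sum_{x\in\mathbb{F}_{q_1}}\zeta_p^{\mathrm{Tr}_{q/p}(zQ(x))}
=\varepsilon_Q\,(-1)^{(m-1)r_Q}\,\eta(-1)^{r_Q}\,\eta(z)^{r_Q}\,q^{m_1}(p^*)^{-mr_Q/2}.
\]

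Finally I would treat the two parities of $r_Q$ separately. If $r_Q$ is odd, then $\eta(z)^{r_Q}=\eta(z)$, $\eta(-1)^{r_Q}=\eta(-1)$, $(-1)^{(m-1)r_Q}=(-1)^{m-1}$, and since $\eta(-1)\eta(z)=\eta(-z)$ the expression becomes $(-1)^{m-1}\eta(-z)\varepsilon_Q q^{m_1}(p^*)^{-mr_Q/2}$, as claimed. If $r_Q$ is even, the sign factors $(-1)^{(m-1)r_Q}$, $\eta(-1)^{r_Q}$, $\eta(z)^{r_Q}$ are all $1$; moreover $mr_Q/2\in\mathbb{Z}$, so from $q=p^m$ and $p^*=(-1)^{(p-1)/2}p$ we get $q^{m_1}(p^*)^{-mr_Q/2}=(-1)^{(p-1)mr_Q/4}q^{m_1-r_Q/2}$, and hence the sum equals $\varepsilon_Q(-1)^{(p-1)mr_Q/4}q^{m_1-r_Q/2}=\epsilon q^{m_1-r_Q/2}$ by the definition of $\epsilon$. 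The only genuinely delicate part of the argument is the sign bookkeeping in this last step — tracking the powers of $-1$, of $\eta(-1)$ and of $p^*$, and interpreting $(p^*)^{-mr_Q/2}$ consistently when $mr_Q$ is odd — but all of it is elementary once the factorization above is in hand.
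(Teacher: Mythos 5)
Your argument is correct. Note that the paper does not prove this lemma at all: it is imported verbatim from \cite{LCL24} and \cite{TXF2017}, so there is no internal proof to compare against; your write-up is essentially the standard argument underlying those references. The reduction via diagonalization (the invertible substitution $\overline{x}\mapsto\overline{x}M$ preserves the sum), the factorization into $r_Q$ one-variable sums plus $q^{m_1-r_Q}$ from the free coordinates, the evaluation $\sum_{t}\zeta_p^{\Tr_{q/p}(ct^2)}=\sum_{u\in\F_q^*}\eta(u)\zeta_p^{\Tr_{q/p}(cu)}$ via $\#\{t:t^2=u\}=1+\eta(u)$, and the application of Lemma~\ref{lem:basic} with $k$ odd are all sound, and your sign bookkeeping in the two parity cases reproduces exactly the paper's $\epsilon$ (for $2\mid r_Q$) and $(-1)^{m-1}\eta(-z)\varepsilon_Q q^{m_1}(p^*)^{-mr_Q/2}$ (for $2\nmid r_Q$); the only convention to fix, as you note, is a single choice of $(p^*)^{1/2}$ so that the product of $r_Q$ factors $(p^*)^{-m/2}$ is read as $(p^*)^{-mr_Q/2}$, which is consistent with how Lemma~\ref{lem:basic} is stated. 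Also, since $\Delta_Q$ is well defined only up to nonzero squares, it is $\eta(\Delta_Q)=\varepsilon_Q$ rather than $\Delta_Q$ itself that your computation uses, which is exactly what you invoke, so no gap arises there.
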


\begin{lemma}\label{lem:main2}
 Let $Q(x)$ be a quadratic form on $\E/\mathbb{F}_{q}$. For any  element $(a,b) \in \mathbb{F}_{q} \times \mathbb{F}_{q_2}$ and $\beta \in \F_q$,
put
\[N(a, b; \beta) = \#\Big\{(x,y) \in \F: a Q(x) + \Tr_{q_2/q}(by) = \beta \Big\}.\] 
We have the following. 
\begin{enumerate}
\item [(1)] When $a = b = 0$, we have $N(0, 0; \beta) = q^M$ if $\beta = 0$, and $0$ otherwise.
\item [(2)] When $b \neq 0$, we have $N(a, b; \beta)=q^{M-1}$.
\item [(3)] When $a \neq 0, b = 0$, we have 
\[
N(a, b; \beta)
=\left\{\begin{array}{ll}
    		q^{M-1} \Big(1 + \epsilon q^{-\frac{r_{Q}}{2}} \upsilon(\beta)\Big), & 2 | r_{Q}, \\
    	q^{M-1} \Big(1 + \epsilon q^{\frac{1 - r_{Q}}{2}} \eta(-a \beta)\Big),  & 2 \not | r_{Q}.
    	\end{array}
    	\right.
\]
\end{enumerate}
\end{lemma}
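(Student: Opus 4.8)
The plan is to evaluate $N(a,b;\beta)$ by detecting the linear constraint with additive characters. Write $\psi(t)=\zeta_p^{\Tr_{q/p}(t)}$ for the canonical additive character of $\F_q$. By orthogonality,
\[
N(a,b;\beta)=\frac{1}{q}\sum_{(x,y)\in\F}\ \sum_{z\in\F_q}\psi\bigl(z\,(aQ(x)+\Tr_{q_2/q}(by)-\beta)\bigr).
\]
The term $z=0$ contributes $q^{M-1}$, and for $z\neq 0$ the summand factors over $x\in\F_{q_1}$ and $y\in\F_{q_2}$, so
\[
N(a,b;\beta)=q^{M-1}+\frac{1}{q}\sum_{z\in\F_q^*}\psi(-z\beta)\Bigl(\sum_{x\in\F_{q_1}}\psi\bigl(z a Q(x)\bigr)\Bigr)\Bigl(\sum_{y\in\F_{q_2}}\psi\bigl(z\Tr_{q_2/q}(by)\bigr)\Bigr).
\]

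Next I would dispose of the $y$-sum. Since $z\in\F_q$, transitivity of the trace gives $\Tr_{q/p}\bigl(z\Tr_{q_2/q}(by)\bigr)=\Tr_{q_2/p}(zby)$, hence $\sum_{y\in\F_{q_2}}\psi\bigl(z\Tr_{q_2/q}(by)\bigr)$ equals $q_2$ when $zb=0$ and $0$ otherwise; as $z\neq 0$ this is $q^{m_2}$ if $b=0$ and $0$ if $b\neq 0$. This immediately proves (2). For (1), if $a=b=0$ the defining equation is just $0=\beta$, so $N=q^M$ or $0$ according as $\beta=0$ or not; alternatively substitute the $Q$-sum $=q^{m_1}$ and use $\sum_{z\in\F_q^*}\psi(-z\beta)=\upsilon(\beta)$.

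For (3) we have $b=0$, so the $y$-sum contributes the constant $q^{m_2}$, and since $a\in\F_q^*$ we may apply Lemma~\ref{lem:main1} with $za\in\F_q^*$ in place of $z$. If $r_Q$ is even, $\sum_x\psi(zaQ(x))=\epsilon q^{m_1-r_Q/2}$ is independent of $z$, and $\sum_{z\in\F_q^*}\psi(-z\beta)=\upsilon(\beta)$ yields $N=q^{M-1}\bigl(1+\epsilon q^{-r_Q/2}\upsilon(\beta)\bigr)$. If $r_Q$ is odd, $\sum_x\psi(zaQ(x))=(-1)^{m-1}\eta(-za)\varepsilon_Q q^{m_1}(p^*)^{-mr_Q/2}$; writing $\eta(-za)=\eta(z)\eta(-a)$ and invoking Lemma~\ref{lem:basic} with $k=1$ for $\sum_{z\in\F_q^*}\eta(z)\psi(-z\beta)$ collapses the $z$-sum.

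I expect the only real work to be the constant-chasing in the odd-rank case: one must verify that
\[
q^{m_2}\cdot q^{-1}\cdot(-1)^{m-1}\varepsilon_Q q^{m_1}(p^*)^{-mr_Q/2}\cdot(-1)^{m-1}q(p^*)^{-m/2}\eta(-a)\eta(\beta)=q^{M-1}\,\epsilon\, q^{(1-r_Q)/2}\eta(-a\beta).
\]
This reduces to the identity $q\,(p^*)^{-m(r_Q+1)/2}=(-1)^{(p-1)m(r_Q+1)/4}q^{(1-r_Q)/2}$, which is legitimate because $r_Q+1$ is even (so the exponent of $-1$ is an integer), together with the definition $\epsilon=\varepsilon_Q(-1)^{(p-1)m(r_Q+1)/4}$, the cancellation of the two factors $(-1)^{m-1}$, and $\eta(-a)\eta(\beta)=\eta(-a\beta)$. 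This sign bookkeeping, rather than any conceptual point, is the main obstacle.
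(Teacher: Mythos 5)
Your proposal is correct and follows essentially the same route as the paper's proof: orthogonality of additive characters, splitting off $z=0$, factoring the remaining sum over $x$ and $y$, killing the $y$-sum when $b\neq 0$, and then combining Lemma~\ref{lem:main1} with Lemma~\ref{lem:basic} in the case $a\neq 0$, $b=0$. The odd-rank constant check you flag is exactly the step the paper leaves implicit, and your identity $q\,(p^*)^{-m(r_Q+1)/2}=(-1)^{(p-1)m(r_Q+1)/4}q^{(1-r_Q)/2}$ is valid, so nothing is missing.
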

\begin{proof} 
By the orthogonal property of additive characters, we have
\begin{linenomath}\begin{align*}
&N(a, b; \beta) = \frac{1}{q} \sum\limits_{(x,y) \in \mathbb{F}} \sum\limits_{z \in \mathbb{F}_{q}} \zeta_{p}^{\Tr_{q/p}\left(z(a Q(x) + \Tr_{q_2/q}(by) - \beta)\right)}  \\ 
& = \frac{1}{q} \sum\limits_{(x,y) \in \mathbb{F}}\left(1 + \sum\limits_{z \in \mathbb{F}_{q}^*}\zeta_{p}^{\Tr_{q/p}\left(z (a Q(x) + \Tr_{q_2/q}(b y) - \beta)\right)}\right)  \\ 
&=q^{M - 1} + \frac{1}{q} \sum\limits_{(x,y) \in \mathbb{F}} \sum\limits_{z \in \mathbb{F}_{q}^*} \zeta_{p}^{\Tr_{q/p}\left(z (a Q(x) + \Tr_{q_2/q}(b y) - \beta)\right)} \\ 
& = q^{M - 1} + \frac{1}{q} \sum\limits_{z \in \mathbb{F}_{q}^*} \zeta_p^{\Tr_{q/p}(- z \beta)}\sum\limits_{x \in \mathbb{F}_{q_1}} \zeta_{p}^{\Tr_{q/p}\left(z a Q(x)\right)}\sum\limits_{y\in \mathbb{F}_{q_2}} \zeta_{p}^{\Tr_{q/p}\left(\Tr_{q_2/q}(z b y)\right)} \\ 
& = q^{M - 1} + \frac{1}{q} \sum\limits_{z \in \mathbb{F}_{q}^*} \zeta_p^{\Tr_{q/p}(- z \beta)} \sum\limits_{x \in \mathbb{F}_{q_1}} \zeta_{p}^{\Tr_{q/p}\left(z a Q(x)\right)}\sum\limits_{y \in \mathbb{F}_{q_2}} \zeta_{p}^{\Tr_{q_2/p}(z b y)}.
\end{align*}\end{linenomath}

(1) When $a = b = 0$, we have $\beta = 0$ and $N(0, 0; 0) = q^M$.

(2) When $b \neq 0$, then $\sum\limits_{y \in \mathbb{F}_{q_2}} \zeta_{p}^{\Tr_{q_2/p}(z b y)} = 0$ and the desired conclusion is obtained.

(3) When $a \neq 0, b = 0$, we have 
\begin{linenomath}\begin{align*}
N(a, b; \beta) = q^{M - 1} + q^{m_2 - 1} \sum\limits_{z \in \mathbb{F}_{q}^*} \zeta_p^{\Tr_{q/p}(- z \beta)} \sum\limits_{x \in \mathbb{F}_{q_1}} \zeta_{p}^{\Tr_{q/p}\left(z a Q(x)\right)}.
\end{align*}\end{linenomath}
By Lemma~\ref{lem:main1}, we have
\begin{linenomath}\begin{align*}
\sum\limits_{x \in \mathbb{F}_{q_1}} \zeta_{p}^{\Tr_{q/p}(z a Q(x))} 
 = \left\{\begin{array}{ll}
    		\epsilon q^{m_1 - \frac{r_{Q}}{2}}, & 2|r_{Q}, \\
    	(-1)^{m-1} \varepsilon_Q \eta(-za) q^{m_1}(p^*)^{-\frac{m r_{Q}}{2}} ,  & 2\not|r_{Q}.
    	\end{array}
    	\right.
\end{align*}\end{linenomath}
It then follows from Lemma~\ref{lem:basic} that the desired conclusion is obtained.
    \end{proof}

From Lemma~\ref{lem:main2}, we can directely get the following corollary.

\begin{corollary}\label{pro:main3}
 Let $Q(x)$ be a quadratic form on $\E/\mathbb{F}_{q}$. For any  element $(a, b, c) \in \mathbb{F}_{q} \times \mathbb{F}_{q_2} \times \mathbb{F}_{q}$ and $\beta \in \F_q$,
put
\[N(a, b, c; \beta) = \#\Big\{(x,y) \in \F: a Q(x) + \Tr_{q_2/q}(b y) + c = \beta \Big\}.\] We have the following.
\begin{enumerate}
\item [\rm{(1)}] When $a = b = 0$, we have $N(0, 0, c; \beta) = N(0, 0; 0) = q^M$ if $\beta = c$, and $0$ otherwise.
\item [\rm{(2)}] When $b \neq 0$, we have $N(a, b, c; \beta) = N(a, b; \beta - c) = q^{M - 1}$.
\item [\rm{(3)}] When $a \neq 0, b = 0$, we have 
\begin{linenomath}\begin{align*}
& N(a, b, c; \beta) = N(a, b; \beta - c) \\
& = \left\{\begin{array}{ll}
    		q^{M - 1}\Big(1 + \epsilon \upsilon(\beta - c) q^{-\frac{r_{Q}}{2}} \Big), & 2|r_{Q}, \\
    	q^{M - 1}\Big(1 + \epsilon \eta\left(- a (\beta - c)\right) q^{\frac{1 - r_{Q}}{2}}\Big),  & 2\not|r_{Q}.
    	\end{array}
    	\right.
\end{align*}\end{linenomath}
\end{enumerate}
\end{corollary}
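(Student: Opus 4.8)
The plan is to reduce $N(a,b,c;\beta)$ directly to the quantity $N(a,b;\,\cdot\,)$ already computed in Lemma~\ref{lem:main2} by a trivial shift of the target value, and then to read off the three cases. First I would note that for any $(x,y) \in \F$ the defining equation $aQ(x) + \Tr_{q_2/q}(by) + c = \beta$ holds if and only if $aQ(x) + \Tr_{q_2/q}(by) = \beta - c$; since $c,\beta \in \F_q$ this is an identity of solution sets, so $N(a,b,c;\beta) = N(a,b;\beta - c)$ for every $(a,b,c)$ and every $\beta$. Equivalently, one could redo the character-sum computation at the start of the proof of Lemma~\ref{lem:main2}, where the factor $\zeta_p^{\Tr_{q/p}(-z\beta)}$ is simply replaced by $\zeta_p^{\Tr_{q/p}(-z(\beta-c))}$; but invoking Lemma~\ref{lem:main2} as a black box is cleaner.

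With this reduction in hand, each part of the corollary is obtained by substituting $\beta - c$ for $\beta$ in the corresponding part of Lemma~\ref{lem:main2}. In part (1), $a = b = 0$ forces the left-hand side of the defining equation to equal the constant $c$, so $N(0,0,c;\beta)$ equals $q^M$ precisely when $\beta = c$ (that is, $\beta - c = 0$) and $0$ otherwise; this agrees with $N(0,0;0) = q^M$. In part (2), $b \neq 0$ gives $N(a,b;\beta - c) = q^{M-1}$ irrespective of the value of $\beta - c$. In part (3), with $a \neq 0$ and $b = 0$, the even-rank branch of Lemma~\ref{lem:main2} with $\upsilon(\beta)$ replaced by $\upsilon(\beta - c)$, and the odd-rank branch with $\eta(-a\beta)$ replaced by $\eta(-a(\beta - c))$, yield exactly the stated formulas.

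There is essentially no obstacle here: the only step requiring a word of justification is the identity $N(a,b,c;\beta) = N(a,b;\beta - c)$, which is immediate from the definition of the counting function, and everything else is verbatim transcription from Lemma~\ref{lem:main2}. I would therefore expect the written proof to amount to a single short paragraph of the form ``the equation $aQ(x) + \Tr_{q_2/q}(by) + c = \beta$ is equivalent to $aQ(x) + \Tr_{q_2/q}(by) = \beta - c$, hence $N(a,b,c;\beta) = N(a,b;\beta-c)$, and the three cases follow directly from Lemma~\ref{lem:main2}.''
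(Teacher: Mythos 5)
Your proposal is correct and matches the paper's intent exactly: the paper states this result as an immediate corollary of Lemma~\ref{lem:main2}, obtained by the same observation that the defining equation is equivalent to $aQ(x) + \Tr_{q_2/q}(by) = \beta - c$, so $N(a,b,c;\beta) = N(a,b;\beta-c)$, and the three cases are read off by substitution. Nothing further is needed.
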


\section{Constructing codes with few weights from quadratic forms}

Let $\omega_0=0,\omega_1,\cdots,\omega_{q-1}$ be all the elements of $\F_q$ listed in some fixed order. For a vector $c = (c_0, c_1, \cdots, c_{n - 1}) \in \F_q^n$, the complete weight enumerator $\omega[c]$ of $c$ is defined as the formal expression 
\[\omega[c] = \omega_0^{k_0} \omega_1^{k_1} \cdots \omega_{q - 1}^{k_{q - 1}},\]
where $\sum\limits_{j = 0}^{q - 1} k_j = n, k_j$ is the number of components of $c$ that equal to $\omega_j$. The complete weight enumerator of a linear code $\mathcal{C}$ is defined by 
\[\omega[\mathcal{C}] = \sum\limits_{c \in \mathcal{C}} \omega[c].\]

In this section, we determine the complete weight enumerators and the weight hierarches of two classes
of linear codes.

\subsection{The code \texorpdfstring{$\mathcal{C}_{Q}$}.} 

 Recall that 
\begin{linenomath}\begin{equation}\label{eq:cq1} \mathcal{C}_{Q} = \{(a Q(x) + \Tr_{q_{2}/q}(b y))_{(x,y) \in \F^\star}: (a,b) \in \F_q \times \F_{q_{2}}\},\end{equation}\end{linenomath}
where $Q: \E \to \mathbb{F}_{q}$ is any arbitrary non-zero quadratic form.  

Using Lemma~\ref{lem:main2}, we obtain the following theorem. 

\begin{theorem}\label{thm:wd-eo}  Assume $Q(x)$ is a quadratic form on $\E/\mathbb{F}_{q}$ and $r_{Q}(\geq 1)$ is its rank. We have the following.

(1) If $r_{Q}$ is even,
then the code $\mathcal{C}_{Q}$ defined by \eqref{eq:cq1} is a $[q^M - 1, m_2 + 1]_{q}$ linear code
with the weight distribution in Table 1 and its complete weight enumerator is
\begin{linenomath}\begin{align*}	
CWE(\mathcal{C}_{Q}) &= \omega_0^{q^M - 1} + q (q^{m_2}-1) \omega_0^{q^{M - 1} - 1} \prod\limits_{\rho = 1}^{q - 1} \omega_\rho^{q^{M - 1}}\\		
& + (q - 1) \omega_0^{q^{M - 1}\left(1 + \epsilon q^{- \frac{r_{Q}}{2}}(q - 1)\right) - 1}\prod\limits_{\rho = 1}^{q - 1} \omega_\rho^{q^{M - 1}\left(1 - \epsilon q^{-\frac{r_{Q}}{2}}\right)}.
		\end{align*}\end{linenomath}
\begin{table}
\centering
\caption{Weight distribution of $\mathcal{C}_{Q}$ when $r_{Q}$ is even}
\begin{tabular*}{10.5cm}{@{\extracolsep{\fill}}ll}
\hline
\textrm{Weight} $\gamma$ \qquad& \textrm{frequency} $A_\gamma$   \\
\hline
0 & 1  \\
$q^{M - 1}(q - 1)$& $q (q^{m_2} - 1)$  \\
$q^{M - 1}(q - 1)\left(1 - \epsilon q^{- \frac{r_{Q}}{2}}\right)$ & $q - 1$  \\
\hline
\end{tabular*}
\end{table}

(2) If $r_{Q}$ is odd, then the code $\mathcal{C}_{Q}$ defined by \eqref{eq:cq1} is a $[q^M - 1, m_2 + 1]_{q}$ linear code with the weight distribution in Table 2 
and its complete weight enumerator is  
\begin{linenomath}\begin{align*}	
CWE(\mathcal{C}_{Q}) &= \omega_0^{q^M - 1} + q (q^{m_2} - 1) \omega_0^{q^{M - 1} - 1} \prod\limits_{\rho = 1}^{q - 1} \omega_\rho^{q^{M - 1}}\\		
& + \frac{q - 1}{2} \omega_0^{q^{M - 1} - 1} \prod\limits_{\rho = 1}^{q - 1} \omega_\rho^{q^{M - 1} \left(1 + \epsilon q^{\frac{1 - r_{Q}}{2}} \eta(- \omega_{\rho})\right)}\\
& + \frac{q - 1}{2} \omega_0^{q^{M - 1} - 1} \prod\limits_{\rho = 1}^{q - 1} \omega_\rho^{q^{M - 1} \left(1 - \epsilon q^{\frac{1 - r_{Q}}{2}} \eta(- \omega_{\rho})\right)}.
		\end{align*}\end{linenomath}
	Moreover, $\mathcal{C}_{Q}$ satisfies the Griesmer bound if and only if $m_1 = r_Q = 1$.
\begin{table}\label{2}
\centering
\caption{Weight distribution of $\mathcal{C}_{Q}$ when $r_{Q}$ is odd }
\begin{tabular*}{9cm}{@{\extracolsep{\fill}}ll}
\hline
\textrm{Weight} $\gamma$ & \textrm{frequency} $A_\gamma$   \\
\hline
$0$ & $1$  \\
$q^{M - 1}(q - 1)$ & $q^{m_2 + 1} - 1$\\
\hline
\end{tabular*}
\end{table}
\end{theorem}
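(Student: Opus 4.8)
The plan is to compute complete weight enumerators coordinate by coordinate, and then read off the weight distributions. For $(a,b)\in\F_q\times\F_{q_2}$ write $c_{a,b}=(aQ(x)+\Tr_{q_2/q}(by))_{(x,y)\in\F^\star}$. The length $q^M-1$ is just $|\F^\star|$, and the $\F_q$-linear evaluation map $(a,b)\mapsto c_{a,b}$ is injective: setting $x=0$ forces $\Tr_{q_2/q}(by)=0$ for all $y\in\F_{q_2}$, hence $b=0$, and then $aQ\equiv 0$ with $Q\neq 0$ (since $r_Q\geq 1$) forces $a=0$; so $\dim_{\F_q}\mathcal{C}_Q=1+m_2$. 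Since $(0,0)\notin\F^\star$ is the only point of $\F$ that is excluded, for $\omega_\rho\in\F_q$ the number of coordinates of $c_{a,b}$ equal to $\omega_\rho$ is $N(a,b;\omega_\rho)$ when $\rho\neq 0$ and $N(a,b;0)-1$ when $\rho=0$; in particular $\wt(c_{a,b})=q^M-N(a,b;0)$. Everything therefore reduces to Lemma~\ref{lem:main2}.

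Next I would partition $\F_q\times\F_{q_2}$ into the three regimes of Lemma~\ref{lem:main2}. The pair $(0,0)$ gives the zero codeword and the term $\omega_0^{q^M-1}$. For $b\neq 0$, $N(a,b;\beta)=q^{M-1}$ for every $\beta$, so each of these $q(q^{m_2}-1)$ codewords has weight $q^{M-1}(q-1)$ and contributes $\omega_0^{q^{M-1}-1}\prod_{\rho=1}^{q-1}\omega_\rho^{q^{M-1}}$. For $a\neq 0$, $b=0$ the outcome depends on the parity of $r_Q$. If $r_Q$ is even, $N(a,0;\beta)=q^{M-1}(1+\epsilon q^{-r_Q/2}\upsilon(\beta))$ depends only on whether $\beta=0$, giving one further enumerator term of multiplicity $q-1$ and the extra weight $q^{M-1}(q-1)(1-\epsilon q^{-r_Q/2})$. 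If $r_Q$ is odd, $N(a,0;\beta)=q^{M-1}(1+\epsilon q^{(1-r_Q)/2}\eta(-a\beta))$; then $\omega_0$ still occurs $q^{M-1}-1$ times (as $\eta(0)=0$) and each such codeword has weight $q^{M-1}(q-1)$, while for $\rho\neq 0$ I would use the factorisation $\eta(-a\omega_\rho)=\eta(a)\,\eta(-\omega_\rho)$, which reduces the $a$-dependence to the single sign $\eta(a)$. As $a$ ranges over $\F_q^*$ this sign is $+1$ for $(q-1)/2$ values and $-1$ for the other $(q-1)/2$, and this is exactly what produces the two enumerator terms of coefficient $(q-1)/2$ in part (2). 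Summing the three contributions yields $CWE(\mathcal{C}_Q)$, and the weight distributions in Table 1 and Table 2 follow from $\wt(c_{a,b})=q^M-N(a,b;0)$ under the same case split.

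For the Griesmer claim in part (2), Table 2 shows the unique nonzero weight is $d=q^{M-1}(q-1)$. Since $m_1\geq 1$ gives $M-1\geq m_2$, each term $d/q^i=(q-1)q^{M-1-i}$ with $0\leq i\leq m_2$ is already an integer, so the Griesmer sum equals $\sum_{i=0}^{m_2}(q-1)q^{M-1-i}=(q-1)q^{m_1-1}\frac{q^{m_2+1}-1}{q-1}=q^M-q^{m_1-1}$. Hence the bound $q^M-1\geq q^M-q^{m_1-1}$ always holds, with equality exactly when $q^{m_1-1}=1$, i.e. $m_1=1$; and $m_1=1$ forces $r_Q=1$ because $1\leq r_Q\leq m_1$, which gives the stated equivalence $m_1=r_Q=1$. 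The only step that is not routine bookkeeping is the odd-rank complete weight enumerator, where the coordinate multiplicities genuinely vary with $a$; the factorisation $\eta(-a\omega_\rho)=\eta(a)\eta(-\omega_\rho)$, which collapses the sum over $a\in\F_q^*$ into two equal halves, is the one place requiring care.
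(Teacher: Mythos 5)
Your proposal is correct and follows essentially the same route as the paper: reduce everything to Lemma~\ref{lem:main2} via the coordinate counts $N(a,b;\omega_\rho)$ (with $N(a,b;0)-1$ at $\omega_0$), split into the cases $(a,b)=(0,0)$, $b\neq 0$, and $a\neq 0,b=0$, use $\eta(-a\omega_\rho)=\eta(a)\eta(-\omega_\rho)$ to split the odd-rank case into the two halves $\eta(a)=\pm 1$, and verify the Griesmer bound by the same computation $\sum_{i=0}^{m_2}\lceil d/q^i\rceil=q^M-q^{m_1-1}$. The only cosmetic difference is that you obtain the dimension $m_2+1$ by a direct kernel/injectivity argument, whereas the paper notes that every nonzero $(a,b)$ yields a codeword of nonzero weight; these are equivalent.
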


\begin{proof} It is easy to check that the length of $\mathcal{C}_{Q}$ is $q^M - 1$ 
and the Hamming weight
of the codeword $c_{(a, b)} = (a Q(x) + \Tr_{q_{2}/q}(b y))_{(x,y) \in \F^\star}$ in $\mathcal{C}_{Q}$ is
\begin{equation}\label{eq:wt0}
     \mathrm{wt}(c_{(a, b)}) = q^{M} - N(a, b;0),
\end{equation} 
where $N(a, b;0)$ is defined in Lemma~\ref{lem:main2}. 

\rm{(1)} Assume $r_{Q}$ is even. From Lemma~\ref{lem:main2}, \eqref{eq:wt0} becomes 
\begin{linenomath}\begin{align*}
  \mathrm{wt}(c_{(a, b)}) = \left\{\begin{array}{ll}
    0,& a = b = 0,\\
    q^{M - 1}(q - 1),& b \neq 0,\\
    q^{M - 1}(q - 1)(1 - \epsilon q^{- \frac{r_{Q}}{2}}), & a \neq 0, b = 0. \\
    	\end{array}
    	\right.
\end{align*}\end{linenomath}
That is, $C_{Q}$ may have two nonzero weights $$\gamma_1 = q^{M - 1}(q - 1),\gamma_2 = q^{M - 1}(q - 1)(1 - \epsilon q^{- \frac{r_{Q}}{2}}),$$
and the multiplicities $A_{\gamma_i}$ of codewords with weight $\gamma_i$ in $C_Q$ is 
\begin{linenomath}\begin{align*}
  A_{\gamma_{1}} = \#\{(a, b) \in \mathbb{F}_{q} \times \mathbb{F}_{q_2} | b \neq 0 \} = q (q^{m_2} - 1),\\
  A_{\gamma_{2}} = \#\{(a, b) \in \mathbb{F}_{q} \times \mathbb{F}_{q_2} | a \neq 0, b = 0\} = q - 1.
\end{align*}\end{linenomath}
Noticing that for any nonzero $(a, b) \in \mathbb{F}_{q} \times \mathbb{F}_{q_2}$ the weight of the codeword $c_{(a, b)}$ is always nonzero, which means the dimension of $\mathcal{C}_{Q}$ is $ m_{2} + 1$.

It is easy to see that the complete weight enumerator $\omega[c_{(a, b)}]$ of the codeword $c_{(a, b)}$ is 
\[\omega_0^{N(a, b; 0)-1} \omega_1^{N(a, b; \omega_1)} \cdots \omega_{q - 1}^{N(a, b; \omega_{q - 1})}.\]
By Lemma~\ref{lem:main2} we have $\omega[c_{(a, b)}] = \omega_0^{q^M - 1}$ if $a = b = 0$ and $\omega_0^{q^{M - 1} - 1} \prod\limits_{\rho = 1}^{q - 1} \omega_\rho^{q^{M - 1}}$ if $b \neq 0$ and $\omega_0^{q^{M - 1}\left(1 + \epsilon q^{- \frac{r_Q}{2}}(q - 1)\right) - 1} \prod\limits_{\rho = 1}^{q - 1} \omega_\rho^{q^{M - 1} \left(1 - \epsilon q^{- \frac{r_{Q}}{2}}\right)}$ if $a \neq 0$. 
Then the desired complete weight enumerator of the code $C_Q$ 
can be obtained.

\rm{(2)} Assume $r_{Q}$ is odd. From Lemma~\ref{lem:main2}, \eqref{eq:wt0} becomes 
\begin{linenomath}\begin{align*}
  \mathrm{wt}(c_{(a, b)}) = \left\{\begin{array}{ll}
    0,&a = b = 0,\\
    q^{M - 1} (q - 1), &else, \\
    	\end{array}
    	\right.
\end{align*}\end{linenomath}
that is, $C_{Q}$ has only one nonzero weight $$\gamma = q^{M - 1}(q - 1),$$
and the multiplicities $A_{\gamma}$ of codewords with weight $\gamma$ in $C_Q$ can be easily obtained.
From the same reason, the dimension of $\mathcal{C}_{Q}$ is also $ m_{2} + 1$.

In this case, similarly,  by Lemma~\ref{lem:main2},  
\rm{(2a)} if $a = b = 0$, then $\omega[c_{(a, b)}] = \omega_0^{q^M - 1}$.\\ 
\rm{(2b)} If $b \neq 0$, then $\omega[c_{(a, b)}] = \omega_0^{q^{M - 1} - 1} \prod\limits_{\rho = 1}^{q - 1} \omega_\rho^{q^{M - 1}}$. \\
\rm{(2c)} If $a \neq 0, b = 0$ and $\eta(a) = 1$, then \[\omega[c_{(a, b)}] = \omega_0^{q^{M - 1} - 1} \prod\limits_{\rho = 1}^{q - 1} \omega_\rho^{q^{M - 1} \left(1 + \epsilon q^{\frac{1 - r_Q}{2}} \eta(-\omega_{\rho})\right)}\]
\rm{(2d)} If $a \neq 0, b = 0$ and $\eta(a) = - 1$, then  \[\omega[c_{(a, b)}] = \omega_0^{q^{M - 1} - 1} \prod\limits_{\rho = 1}^{q-1} \omega_\rho^{q^{M - 1}\left(1 - \epsilon q^{\frac{1 - r_Q}{2}}\eta(-\omega_{\rho})\right)}.\]
Thus the complete weight enumerator of the code $C_Q$ can be desired.

Now we check the optimality. In this case, the minimum distance of $\mathcal{C}_{Q}$ is $d=q^{M-1} (q - 1)$, which means the code is a $[q^M-1,m_2 + 1,q^{M - 1}(q - 1)]_{q}$ linear code. Thus 
\begin{linenomath}\begin{align*}
   \sum\limits_{i = 0}^{m_2} \lceil \frac{d}{q^i} \rceil 
  & = (q-1)(q^{M-1} + q^{M-2} + \cdots + q^{m_1-1})   \\
  &= q^{M} - q^{m_1-1},
\end{align*}\end{linenomath}
which means that the code
 $\mathcal{C}_{Q}$ is optimal with respect to the Griesmer bound if and only if $m_1 = 1$.

\end{proof}

\begin{lemma}[Ashikhmin–Barg lemma {\cite{AABA1998, YD2006}}]\label{lem:minimal} 
A linear code $\mathcal{C}$ over $\F_q$ is minimal if 
\[
\frac{w_{min}}{w_{max}} > \frac{q-1}{q}, 
\]
where $w_{min}$ and $w_{max}$ denote the minimum and maximum nonzero Hamming weights in the code $\mathcal{C}$.
\end{lemma}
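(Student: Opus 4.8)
The plan is to reduce the statement to a claim about pairs of codewords and then run a short double-counting argument. Recall that $\mathcal{C}$ is minimal exactly when, for any two nonzero codewords $c, c' \in \mathcal{C}$, the inclusion $\Supp(c') \subseteq \Supp(c)$ forces $c'$ to be a scalar multiple of $c$. So I would fix nonzero $c, c' \in \mathcal{C}$ with $\Supp(c') \subseteq \Supp(c)$, assume toward a contradiction that $c'$ is \emph{not} a scalar multiple of $c$, and derive $w_{min}/w_{max} \le (q-1)/q$.

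The key device is a count over the family $\{\, c' - \lambda c : \lambda \in \F_q \,\}$ of $q$ codewords. On one hand, since $c' \neq 0$ and $c'$ is not a scalar multiple of $c$, \emph{every} vector in this family (including $\lambda = 0$) is a nonzero codeword, hence has Hamming weight at least $w_{min}$. On the other hand, since $\Supp(c') \subseteq \Supp(c)$, every coordinate outside $\Supp(c)$ vanishes in all of these vectors; and for each $i \in \Supp(c)$ we have $c_i \neq 0$, so $(c' - \lambda c)_i = 0$ for exactly one $\lambda$, namely $\lambda = c'_i/c_i$. Hence, summing the number of zero coordinates of $c' - \lambda c$ over $\lambda \in \F_q$ equals $\#\Supp(c) = \mathrm{wt}(c)$, so that
\[
\sum_{\lambda \in \F_q} \mathrm{wt}(c' - \lambda c) = q\cdot \mathrm{wt}(c) - \mathrm{wt}(c) = (q-1)\,\mathrm{wt}(c).
\]
Comparing the two estimates gives $(q-1)\,\mathrm{wt}(c) \ge q\, w_{min}$, and since $\mathrm{wt}(c) \le w_{max}$ we conclude $(q-1)\,w_{max} \ge q\, w_{min}$, i.e. $w_{min}/w_{max} \le (q-1)/q$, contradicting the hypothesis. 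Therefore $c'$ is a scalar multiple of $c$, and $\mathcal{C}$ is minimal.

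This is a classical, self-contained counting argument, so I do not expect any genuine obstacle; the only steps needing a little care are verifying that all $q$ vectors $c' - \lambda c$ are nonzero under the standing assumption (in particular for $\lambda = 0$, which uses $c' \neq 0$), and that the support of each of them lies inside $\Supp(c)$, so that "number of vanishing coordinates" plus "weight" indeed sums to $\mathrm{wt}(c)$. Since the authors only cite this lemma, I would in fact simply invoke \cite{AABA1998, YD2006} and omit the proof, recording the above only for completeness.
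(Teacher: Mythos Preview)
Your proposal is correct: the paper does not prove this lemma at all but simply cites \cite{AABA1998, YD2006}, exactly as you anticipate in your final paragraph, and the double-counting argument you supply for completeness is the standard one and is sound. There is nothing to compare, since the paper offers no proof of its own.
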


From Lemma~\ref{lem:minimal} and Theorem~\ref{thm:wd-eo} we can obtain the follow corollary.

\begin{corollary}
Assume $Q(x)$ is a quadratic form on $\E/\mathbb{F}_{q}$ with the rank $r_{Q} \geq 1$ and $\mathcal{C}_Q$ is the code defined in \eqref{eq:cq1}.
\begin{enumerate}
    \item [\rm{(1)}]When $r_{Q}$ is even, the code $\mathcal{C}_Q$ is minimal if $\epsilon = 1$ and $r_Q > 2$ or $\epsilon = - 1$.
    \item [\rm{(2)}]When $r_{Q}$ is odd, the code $\mathcal{C}_Q$ is minimal.
    
\end{enumerate}
\end{corollary}

In the following sequel we shall determine the weight hierarchy of $\mathcal{C}_{Q}$.
By Theorem~\ref{thm:wd-eo}, we know that the dimension of the code $\mathcal{C}_{Q}$ defined by \eqref{eq:cq1} is $m_2 + 1$. So, the map $\phi: \mathbb{F}_{q} \times \mathbb{F}_{q_{2}} \rightarrow \mathcal{C}_{Q}$, 
defined by \[\phi(a,b) = c_{(a,b)} = (a Q(x) + \Tr_{q_2 / q}(b y))_{(x,y)\in \F^\star},\]
is an $\mathbb{F}_{q}$-linear isomorphism. 
Under this isomorphism, the set of $r$-dimensional subspaces of $\mathcal{C}_{Q}$ bijectively corresponds to those of $\mathbb{F}_{q} \times \mathbb{F}_{q_{2}}$:
\[[\mathcal{C}_{Q}, r]_{q} = \{\phi(H_r): H_r \in [\mathbb{F}_{q} \times \mathbb{F}_{q_{2}}, r]_{q}\}.\]
Consequently, for any $ K \in [\mathcal{C}_{Q}, r]_{q}$, there exists a unique $H_{r} \in [\mathbb{F}_{q} \times \mathbb{F}_{q_{2}}, r]_{q}$ such that $K = \phi(H_{r})$. 
Thus $\#\textrm{Supp}(K)$ is computed as follows: 
\begin{linenomath}\begin{align*}
    \#\textrm{Supp}(K) & = \#\textrm{Supp}(\phi(H_r))\\
    & = \# \Big\{(x, y) \in \F^\star: a Q(x) + \Tr_{q_2/q}(b y) \ne 0 \ \ \textrm{for some}\ \  (a, b) \in H_r  \Big\}\\
    & = q^M - 1 - \#\Big\{(x, y) \in \F^\star: a Q(x) + \Tr_{q_2/q}(b y) = 0, \forall (a, b) \in H_r \Big\}.
\end{align*}\end{linenomath}
For any $H_{r} \in [\mathbb{F}_{q} \times \mathbb{F}_{q_{2}}, r]_{q}$, define \begin{equation}\label{eq:cqnr}N(H_r) = \#\{(x,y) \in \mathbb{F}^\star:aQ(x) + \Tr_{q_2/q}(b y) = 0,  \forall (a,b) \in H_r \}.\end{equation}
This directly yields the 
$r$-th generalized Hamming weight
\begin{linenomath}\begin{align}\label{eq:gwt1}
 d_{r}(\mathcal{C}_{Q}) & = q^M - 1 - \max\Big\{N(H_r): H_r \in [\mathbb{F}_{q} \times \mathbb{F}_{q_{2}},r]_{q}\Big\}. 
       \end{align}\end{linenomath}
Therefore, the determination of $d_{r}(\mathcal{C}_{Q})$ reduces to calculating $N(H_r)$.

\begin{lemma} \label{lem:gwt2} Assume $Q(x)$ is a quadratic form on $\E/\mathbb{F}_{q}$ with rank $r_{Q}$. Let $H_{r}$ be an $r$-dimensional subspace of $\mathbb{F}_{q} \times \mathbb{F}_{q_{2}}$ and $N(H_r)$ be defined by \eqref{eq:cqnr}. Then
$$
N(H_r) = \begin{cases}q^{M - r}\left(\epsilon t q^{- \frac{r_{Q}}{2}}  + 1\right) - 1, & 2\mid r_{Q},\\
q^{M - r} - 1, & 2\nmid r_{Q}\end{cases}
$$
where $t = \#\{(a, 0) \in H_r: a \ne 0\}$.
\end{lemma}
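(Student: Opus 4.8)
The plan is to evaluate $N(H_r)$ by the orthogonality of additive characters, reducing it to the Weil-type sum of Lemma~\ref{lem:main1}. For $(a,b)\in \F_q\times\F_{q_2}$ and $(x,y)\in\mathbb{F}$ set $v_{(a,b)}(x,y)=aQ(x)+\Tr_{q_2/q}(by)$; since $Q(x)\in\F_q$ and $\Tr_{q_2/q}(by)$ is $\F_q$-linear in $b$, the map $(a,b)\mapsto v_{(a,b)}(x,y)$ is $\F_q$-linear on $H_r$, so it vanishes on $H_r$ exactly when it vanishes on a basis of $H_r$. Expanding each of these $r$ vanishing conditions through $\frac1q\sum_{z\in\F_q}\zeta_p^{\Tr_{q/p}(zv)}$ (which is $1$ if $v=0$ and $0$ otherwise), and re-summing over $H_r$ by linearity together with the bijection $(z_1,\dots,z_r)\mapsto\sum_i z_i(a_i,b_i)$, I obtain
\[
q^r\big(N(H_r)+1\big)=\sum_{(x,y)\in\mathbb{F}}\ \sum_{(a,b)\in H_r}\zeta_p^{\Tr_{q/p}\left(aQ(x)+\Tr_{q_2/q}(by)\right)},
\]
where the $+1$ accounts for the point $(0,0)\notin\mathbb{F}^\star$, which always satisfies the defining equations of \eqref{eq:cqnr}.

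Next I would interchange the two sums and factor the inner sum over $x\in\mathbb{F}_{q_1}$ and $y\in\mathbb{F}_{q_2}$, using $\Tr_{q/p}(\Tr_{q_2/q}(by))=\Tr_{q_2/p}(by)$. Since $\sum_{y\in\mathbb{F}_{q_2}}\zeta_p^{\Tr_{q_2/p}(by)}$ is $q^{m_2}$ when $b=0$ and $0$ otherwise, only the pairs $(a,0)\in H_r$ survive, leaving
\[
q^r\big(N(H_r)+1\big)=q^{m_2}\sum_{(a,0)\in H_r}\ \sum_{x\in\mathbb{F}_{q_1}}\zeta_p^{\Tr_{q/p}(aQ(x))}.
\]
The term $a=0$ contributes $q^{m_2}\cdot q^{m_1}=q^M$. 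For the $t$ terms with $a\neq0$, Lemma~\ref{lem:main1} gives $\sum_x\zeta_p^{\Tr_{q/p}(aQ(x))}=\epsilon q^{m_1-r_Q/2}$ when $r_Q$ is even; adding these yields $q^r(N(H_r)+1)=q^M+\epsilon t q^{M-r_Q/2}$, and dividing by $q^r$ gives the claimed formula in the even case.

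When $r_Q$ is odd, Lemma~\ref{lem:main1} instead produces the factor $(-1)^{m-1}\varepsilon_Q\,\eta(-a)\,q^{m_1}(p^*)^{-m r_Q/2}$, which depends on $a$ through $\eta(-a)$; controlling this is the one real point of the proof. Here I would use that $A=\{a\in\F_q:(a,0)\in H_r\}$ is an $\F_q$-subspace of the one-dimensional space $\F_q$, hence equals $\{0\}$ or $\F_q$. In the first case $t=0$ and there is nothing to add; in the second, $\sum_{a\in\F_q^*}\eta(-a)=\eta(-1)\sum_{a\in\F_q^*}\eta(a)=0$, since $\eta$ takes the values $\pm1$ equally often on $\F_q^*$. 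Either way the contribution of the terms with $a\neq0$ vanishes, so $q^r(N(H_r)+1)=q^M$ and $N(H_r)=q^{M-r}-1$. The vanishing of this quadratic-character sum is the main obstacle; the rest is routine bookkeeping with character orthogonality and Lemma~\ref{lem:main1}.
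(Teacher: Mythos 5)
Your proof is correct and follows essentially the same route as the paper: expand the vanishing conditions by orthogonality of additive characters to get $q^r(N(H_r)+1)$ as a double character sum, factor over $x$ and $y$ so that only pairs $(a,0)\in H_r$ survive, and evaluate with Lemma~\ref{lem:main1}. Your explicit treatment of the odd-rank case (noting that $\{a:(a,0)\in H_r\}$ is $\{0\}$ or $\F_q$, so $\sum_{a\ne 0}\eta(-a)=0$) merely spells out a step the paper leaves implicit in ``applying Lemma~\ref{lem:main1}''.
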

\begin{proof} By the orthogonal property of additive characters, we have
\begin{linenomath}\begin{align*}
q^{r} (N(H_r) + 1) &= \sum\limits_{(x, y) \in \mathbb{F}} \sum\limits_{(a, b) \in H_r}\zeta_{p}^{\Tr_{q/p}(a Q(x) + \Tr_{q_2/q}(b y))}  \\ \nonumber
& = \sum\limits_{(a, b) \in H_r} \sum\limits_{x \in \mathbb{F}_{q_1}}\zeta_{p}^{\Tr_{q/p}(a Q(x))} \sum\limits_{y \in \mathbb{F}_{q_2}} \zeta_{p}^{\Tr_{q_2/p}(b y)}   \\ \nonumber
& =  q^{m_2}\sum\limits_{(a, 0) \in H_r} \sum\limits_{x \in \mathbb{F}_{q_1}}\zeta_{p}^{\Tr_{q/p}(a Q(x))}    \\ \nonumber
&= q^{m_2}\sum\limits_{\substack{(a, 0) \in H_r\\
a \ne 0}}\sum\limits_{x \in \mathbb{F}_{q_1}}\zeta_{p}^{\Tr_{q/p}(a Q(x))} + q^{m_1} q^{m_2},
\end{align*}\end{linenomath}
where the third equality follows from the orthogonality of when $b \neq 0$.
Applying Lemma~\ref{lem:main1}, the desired conclusion is obtained.
\end{proof}

Obviously, for $H_{r}$ and $t$ defined in Lemma~\ref{lem:gwt2}, if $ (1, 0) \in H_{r}$, then $t = q - 1$; if $ (1, 0) \notin H_{r}$, then $t = 0$. Thus, by formula~\eqref{eq:gwt1} and Lemma~\ref{lem:gwt2}, we have the following theorem.

\begin{theorem}\label{thm:gwt1} Assume $Q(x)$ is a quadratic form on $\E/\mathbb{F}_{q}$ and $\mathcal{C}_{Q}$ defined by \eqref{eq:cq1}. 
For each $ r $ and $ 1\leq r \leq m_2 + 1$, we have the following.\\
   (1) When $r_{Q}$ is even and $\epsilon = 1$,
    \begin{linenomath}\begin{align*}
    d_r(\mathcal{C}_{Q}) = q^{M - r}\left(q^{r} - 1 - (q - 1) q^{-\frac{r_{Q}}{2}}\right).
\end{align*}\end{linenomath}
(2) When $r_{Q}$ is even and $\epsilon = - 1$,
\begin{linenomath}\begin{align*}
    &d_r(\mathcal{C}_{Q}) = \begin{cases}
   q^{M - r}\left(q^{r} - 1\right), & 1 \leq r < m_{2} + 1,\\
    q^{M - r}\left(q^{r} - 1 + (q - 1) q^{-\frac{r_{Q}}{2}}\right), & r = m_{2} + 1.\\
     \end{cases}
\end{align*}\end{linenomath}
(3) When $r_{Q}$ is odd,
\begin{linenomath}\begin{align*}
    d_r(\mathcal{C}_{Q}) = q^{M - r}\left(q^{r} - 1\right).
\end{align*}\end{linenomath}
\end{theorem}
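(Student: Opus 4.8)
The plan is to combine the reduction formula~\eqref{eq:gwt1} with the explicit count of $N(H_r)$ from Lemma~\ref{lem:gwt2}. Since
\[
d_r(\mathcal{C}_{Q}) = q^M - 1 - \max\{N(H_r): H_r \in [\mathbb{F}_{q} \times \mathbb{F}_{q_{2}}, r]_{q}\},
\]
and since Lemma~\ref{lem:gwt2} shows $N(H_r)$ depends on $H_r$ only through the quantity $t = \#\{(a,0) \in H_r: a \ne 0\}$, everything reduces to understanding which values $t$ can take as $H_r$ ranges over $r$-dimensional subspaces. As noted just before the theorem, the line $\langle (1,0)\rangle$ is the unique one-dimensional subspace of $\mathbb{F}_q \times \mathbb{F}_{q_2}$ of the form $\{(a,0)\}$, so any $H_r$ either contains it (giving $t = q-1$) or misses it (giving $t = 0$). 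Hence $t \in \{0, q-1\}$, and both values are attainable: $t = 0$ is attained whenever $r \le m_2$ (pick an $r$-dimensional subspace inside $\{0\}\times\mathbb{F}_{q_2}$, or more generally any $H_r$ avoiding $(1,0)$), and $t = q-1$ is attained for every $1 \le r \le m_2+1$ (include $(1,0)$ as one basis vector). The only subtlety is that when $r = m_2+1 = \dim(\mathbb{F}_q\times\mathbb{F}_{q_2})$ the whole space is the only choice, which contains $(1,0)$, forcing $t = q-1$.

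With this in hand, the three cases are routine. In case~(3), $r_Q$ odd, Lemma~\ref{lem:gwt2} gives $N(H_r) = q^{M-r} - 1$ regardless of $t$, so the maximum is $q^{M-r}-1$ and $d_r = q^M - 1 - (q^{M-r}-1) = q^{M-r}(q^r - 1)$. In case~(1), $r_Q$ even and $\epsilon = 1$: then $N(H_r) = q^{M-r}(tq^{-r_Q/2}+1) - 1$ is increasing in $t$, so it is maximized by taking $t = q-1$ (possible for all $r$ in range), yielding $\max N(H_r) = q^{M-r}((q-1)q^{-r_Q/2}+1)-1$ and hence $d_r = q^{M-r}(q^r - 1 - (q-1)q^{-r_Q/2})$. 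In case~(2), $r_Q$ even and $\epsilon = -1$: now $N(H_r) = q^{M-r}(1 - tq^{-r_Q/2})-1$ is decreasing in $t$, so it is maximized by taking $t = 0$; this is possible precisely when $r \le m_2$, giving $\max N(H_r) = q^{M-r}-1$ and $d_r = q^{M-r}(q^r-1)$, whereas for $r = m_2+1$ we are forced to $t = q-1$, giving $\max N(H_r) = q^{M-r}((q-1)(-q^{-r_Q/2})+1)-1 = q - 1 - (q-1)q^{-r_Q/2} - 1$ wait---more cleanly, $d_{m_2+1} = q^M - 1 - (q^{M-r}(1-(q-1)q^{-r_Q/2})-1) = q^{M-r}(q^r - 1 + (q-1)q^{-r_Q/2})$ with $r = m_2+1$.

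I would write this up by first stating the dichotomy $t \in \{0, q-1\}$ as an observation (already essentially done in the paragraph preceding the theorem), then noting the attainability of each value in the required range of $r$, and finally substituting into Lemma~\ref{lem:gwt2} and~\eqref{eq:gwt1} in each of the three cases, taking care in case~(2) to split off $r = m_2+1$. The main (and only) obstacle is the bookkeeping around attainability of $t = 0$: one must verify that when $r \le m_2$ there genuinely exists an $r$-dimensional subspace of $\mathbb{F}_q \times \mathbb{F}_{q_2}$ avoiding $(1,0)$ (equivalently, avoiding all nonzero vectors of the form $(a,0)$, which are all scalar multiples of $(1,0)$), and this follows immediately since $\{0\}\times\mathbb{F}_{q_2}$ has dimension $m_2 \ge r$ and contains no such vector. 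Everything else is direct substitution and arithmetic simplification of the form $q^M - 1 - N(H_r)$.
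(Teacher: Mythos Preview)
Your proposal is correct and follows essentially the same approach as the paper: the paper observes (in the sentence immediately preceding the theorem) that $t\in\{0,q-1\}$ according as $(1,0)\notin H_r$ or $(1,0)\in H_r$, and then simply invokes formula~\eqref{eq:gwt1} and Lemma~\ref{lem:gwt2} without further detail. Your write-up is in fact more careful than the paper's, since you spell out the attainability of $t=0$ when $r\le m_2$ and its failure at $r=m_2+1$, which the paper leaves implicit.
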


As special cases of Theorem~\ref{thm:wd-eo} and Theorem~\ref{thm:gwt1}, we give the following four examples, which are verified by Magma programs.

\begin{example}
 Let $(q, m_{1}, m_{2}) = (3, 4, 3)$, $Q(x)=\Tr_{3^4/3}(x^{2})$,
 we have $r_{Q} = 4$ and $\varepsilon_Q = -1$. Then, the corresponding code $\mathcal{C}_{Q}$ has parameters $[2186, 4, 1458]_{3}$ 
 and the complete weight enumerator is 
\begin{linenomath}\begin{align*}
  \omega_{0}^{2186} + 78 \omega_{0}^{728} \omega_{1}^{729} \omega_{2}^{729} + 2 \omega_{0}^{566} \omega_{1}^{810}\omega_{2}^{810}.\end{align*}\end{linenomath}
And the weight hierarchy is $$ d_1(\mathcal{C}_{Q})= 1458, d_2(\mathcal{C}_{Q})= 1944, d_3(\mathcal{C}_{Q})=2106 , d_4(\mathcal{C}_{Q})= 2166.$$
\end{example}

\begin{example}
 Let $(q, m_{1}, m_{2}) = (5, 3, 2)$, $Q(x) = \Tr_{5^3/5}(x^{2}) - \frac{1}{3}\Tr_{5^3/5}(x)^2$,
 we have $r_{Q} = 2$ and $\varepsilon_Q = - 1$. Then, the corresponding code $\mathcal{C}_{Q}$ has parameters $[3124, 3, 2500]_{5}$ 
 and the complete weight enumerator is 
\begin{linenomath}\begin{align*}
  \omega_{0}^{3124} + 120 \omega_{0}^{624} \omega_{1}^{625} \omega_{2}^{625} \omega_{3}^{625} \omega_{4}^{625} + 4 \omega_{0}^{124} \omega_{1}^{750} \omega_{2}^{750} \omega_{3}^{750} \omega_{4}^{750}.\end{align*}\end{linenomath}
And the weight hierarchy is $$ d_1(\mathcal{C}_{Q})= 2500, d_2(\mathcal{C}_{Q})= 3000, d_3(\mathcal{C}_{Q})= 3120.$$
\end{example}

\begin{example}
 Let $(q, m_{1}, m_{2}) = (9, 3, 2), Q(x) = \Tr_{9^3/9}(g x^{2})$
 , where $g$ is a primitive element of $\mathbb{F}_{9^3}$, we have $r_{Q} = 3$ and $\varepsilon_Q = - 1$. Also, we assume $\eta(- \omega_i) = 1, i = 1, \ldots, 4$ and $\eta(- \omega_j) = - 1, i = 5, \ldots, 8$   Then, the corresponding code $\mathcal{C}_{Q}$ has parameters $[59048, 3, 52488]_{9}$ 
 and the complete weight enumerator is 
\begin{linenomath}\begin{align*}
  &\omega_{0}^{59048} + 720 \omega_{0}^{6560} \omega_{1}^{6561} \omega_{2}^{6561} \omega_{3}^{6561} \omega_{4}^{6561} \omega_{5}^{6561} \omega_{6}^{6561} \omega_{7}^{6561} \omega_{8}^{6561}\\
  & + 4 \omega_{0}^{6560} \omega_{1}^{5832} \omega_{2}^{5832} \omega_{3}^{5832} \omega_{4}^{5832} \omega_{5}^{7290} \omega_{6}^{7290} \omega_{7}^{7290} \omega_{8}^{7290}\\
  & + 4 \omega_{0}^{6560} \omega_{1}^{7290} \omega_{2}^{7290} \omega_{3}^{7290} \omega_{4}^{7290} \omega_{5}^{5832} \omega_{6}^{5832} \omega_{7}^{5832} \omega_{8}^{5832}.\end{align*}\end{linenomath}
And the weight hierarchy is $$ d_1(\mathcal{C}_{Q})=52488, d_2(\mathcal{C}_{Q})=52830 , d_3(\mathcal{C}_{Q})=58968 .$$
\end{example}

\begin{example}
 Let $(q, m_{1}, m_{2}) = (5, 2, 3), Q(x) = \Tr_{5^2/5}(x^{2}) - \frac{1}{2} \Tr_{5^2/5}(x)^2$,
 we have $r_{Q} = 1$ and $\varepsilon_Q = 1$. And assume $\eta(- \omega_1) = 1, \eta(- \omega_2) = 1, \eta(- \omega_3) = -1, \eta(- \omega_4) = - 1$.Then, the corresponding code $\mathcal{C}_{Q}$ has parameters $[3124, 4, 2500]_{5}$ 
 and the complete weight enumerator is 
\begin{linenomath}\begin{align*}
  \omega_{0}^{3124} + 620 \omega_{0}^{624} \omega_{1}^{625} \omega_{2}^{625} \omega_{3}^{625} \omega_{4}^{625} + 2 \omega_{0}^{624} \omega_{1}^{1250} \omega_{2}^{1250} + 2 \omega_{0}^{624} \omega_{3}^{1250} \omega_{4}^{1250}.\end{align*}\end{linenomath}
And the weight hierarchy is $$ d_1(\mathcal{C}_{Q})=2500, d_2(\mathcal{C}_{Q})=3000, d_3(\mathcal{C}_{Q})=3100 , d_4(\mathcal{C}_{Q})=3120 .$$
\end{example}

\subsection{The code \texorpdfstring{$\mathcal{C}_{Q}'$}.} 

 Recall that 
 \begin{linenomath}\begin{equation}\label{eq:cq2}\mathcal{C}_{Q}' = \{(a Q(x) + \Tr_{q_{2}/q}(b y) + c)_{(x,y) \in \F}: (a,b,c) \in \F_q \times \F_{q_{2}} \times \F_q\},\end{equation}\end{linenomath}
where $Q: \E \to \mathbb{F}_{q}$ is any arbitrary non-zero quadratic form.

From Corollary~\ref{pro:main3}, we can directly obtain the weights and frequencies of $\mathcal{C}_{Q}'$ in the following theorem. We omit its proof.

\begin{theorem}\label{thm:eo1} Assume $Q(x)$ is a quadratic form on $\E/\mathbb{F}_{q}$ and $r_{Q}$ is its rank. We have the following.

(1) If $r_{Q}$ is even,
then the code $\mathcal{C}_{Q}'$ defined by \eqref{eq:cq2} is a $[q^M , m_2 + 2]_{q}$ linear code
with the weight distribution in Table 3 and its complete weight enumerator is
\begin{linenomath}\begin{align*}	
CWE(\mathcal{C}_{Q}') = \sum\limits_{i = 0}^{q - 1} \omega_i^{q^M} + q^2(q^{m_2} - 1) \prod\limits_{\rho = 0}^{q - 1} \omega_\rho^{q^{M - 1}} + \sum\limits_{i = 0}^{q - 1}(q - 1)\omega_i^{t_1}\prod\limits_{\rho = 0, \rho \ne i}^{q - 1} \omega_\rho^{t_2},
\end{align*}\end{linenomath}
where 
\begin{linenomath}\begin{align*}
    & t_1 = q^{M - 1} \left(1 + \epsilon (q - 1) q^{- \frac{r_{Q}}{2}}\right), t_2 = q^{M - 1}\left(1 - \epsilon q^{- \frac{r_{Q}}{2}}\right).
    \end{align*}\end{linenomath}
\begin{table}
\centering
\caption{Weight distribution of $\mathcal{C}_{Q}'$ when $r_{Q}$ is even}
\begin{tabular*}{10.5cm}{@{\extracolsep{\fill}}ll}
\hline
\textrm{Weight} $\gamma$ \qquad& \textrm{frequency} $A_\gamma$   \\
\hline
0 & 1  \\
$q^M$ & $q - 1$  \\
$q^{M - 1} (q - 1)$ & $q^2 (q^{m_2} - 1)$  \\
$q^{M - 1} (q - 1)(1 - \epsilon q^{-\frac{r_{Q}}{2}})$ & $q - 1$  \\
$q^{M - 1} (q - 1 + \epsilon q^{- \frac{r_{Q}}{2}})$ & $(q - 1)^2$  \\
\hline
\end{tabular*}
\end{table}
(2) If $r_{Q}$ is odd, then the code $\mathcal{C}_{Q}'$ defined by \eqref{eq:cq2} is a $[q^M , m_2 + 2]_{q}$ linear code
with the weight distribution in Table 4 and its complete weight enumerator is
\begin{linenomath}\begin{align*}	
CWE(\mathcal{C}_{Q}') &= \sum\limits_{i = 0}^{q - 1} \omega_i^{q^M} + q^2 (q^{m_2} - 1) \prod\limits_{\rho = 0}^{q - 1} \omega_\rho^{q^{M - 1}}\\
    & + \sum\limits_{i = 0}^{q - 1} \frac{1}{2}(q-1) \omega_i^{q^{M - 1}} \prod\limits_{\substack{\rho = 0,\\ \rho \ne i}}^{q - 1} \omega_\rho^{k_1} + \sum\limits_{i = 0}^{q - 1} \frac{1}{2} (q - 1)\omega_i^{q^{M - 1}}\prod\limits_{\substack{\rho = 0,\\ \rho \ne i}}^{q - 1} \omega_\rho^{k_2},
\end{align*}\end{linenomath}
where 
\begin{linenomath}\begin{align*}
    & k_1 = q^{M - 1} \left(1 + \epsilon q^{\frac{1 - r_{Q}}{2}} \eta(\omega_i - \omega_\rho)\right), k_2 = q^{M - 1} \left(1 - \epsilon q^{\frac{1 - r_{Q}}{2}} \eta(\omega_i - \omega_\rho)\right).
    \end{align*}\end{linenomath}
\begin{table}
\centering
\caption{Weight distribution of $\mathcal{C}_{Q}'$ when $r_{Q}$ is odd}
\begin{tabular*}{10.5cm}{@{\extracolsep{\fill}}ll}
\hline
\textrm{Weight} $\gamma$ \qquad& \textrm{frequency} $A_\gamma$   \\
\hline
$0$ & $1$  \\
$q^M$ & $q - 1$  \\
$q^{M - 1} (q - 1)$ & $q^2 (q^{m_2} - 1) + q - 1$  \\
$ q^{M - 1} \left(q - 1 - \epsilon q^{\frac{1 - r_{Q}}{2}}\right)$ & $\frac{1}{2} (q - 1)^2$  \\
$q^{M - 1} \left(q - 1 + \epsilon q^{\frac{1 - r_{Q}}{2}}\right)$ & $\frac{1}{2} (q - 1)^2$  \\
\hline
\end{tabular*}
\end{table}
\end{theorem}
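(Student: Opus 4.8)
The plan is to follow the proof of Theorem~\ref{thm:wd-eo} almost verbatim, with Corollary~\ref{pro:main3} playing the role that Lemma~\ref{lem:main2} played there. First I would note that $\mathcal{C}_{Q}'$ has length $q^M=\#\mathbb{F}$ and that, for $(a,b,c)\in\mathbb{F}_q\times\mathbb{F}_{q_2}\times\mathbb{F}_q$, the codeword $c_{(a,b,c)}=(aQ(x)+\Tr_{q_2/q}(by)+c)_{(x,y)\in\mathbb{F}}$ has Hamming weight $\mathrm{wt}(c_{(a,b,c)})=q^M-N(a,b,c;0)$ and complete weight enumerator $\omega[c_{(a,b,c)}]=\prod_{i=0}^{q-1}\omega_i^{N(a,b,c;\omega_i)}$, where $N(a,b,c;\beta)$ is exactly the quantity computed in Corollary~\ref{pro:main3}. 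Thus the whole theorem reduces to reading off these values in the three regimes of that corollary and then bookkeeping the frequencies.

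Next I would treat the three cases. When $a=b=0$, the codeword is the constant vector $(c,\dots,c)$, so it has weight $0$ for $c=0$ and weight $q^M$ for each of the $q-1$ values $c\neq0$; summing $\omega[c_{(0,0,c)}]=\omega_i^{q^M}$ over $c=\omega_i\in\mathbb{F}_q$ produces the term $\sum_{i=0}^{q-1}\omega_i^{q^M}$. When $b\neq0$, Corollary~\ref{pro:main3}(2) gives $N(a,b,c;\beta)=q^{M-1}$ for every $\beta$, so each of the $q^2(q^{m_2}-1)$ such codewords has weight $q^{M-1}(q-1)$ and complete weight enumerator $\prod_{\rho=0}^{q-1}\omega_\rho^{q^{M-1}}$. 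The essential case is $a\neq0,\ b=0$: here I would set $c=\omega_i$ and read $N(a,0,\omega_i;\omega_\rho)$ from Corollary~\ref{pro:main3}(3), distinguishing $\rho=i$ from $\rho\neq i$.

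For $r_Q$ even this directly gives, for fixed $i$, the enumerator $\omega_i^{t_1}\prod_{\rho\neq i}\omega_\rho^{t_2}$ with $t_1,t_2$ as stated; multiplying by the $q-1$ admissible values of $a$ and summing over $i$ gives the last term of $CWE(\mathcal{C}_{Q}')$, while taking $\beta=0$ separates the frequencies into $q-1$ codewords of weight $q^{M-1}(q-1)(1-\epsilon q^{-r_Q/2})$ (the case $c=0$) and $(q-1)^2$ of weight $q^{M-1}(q-1+\epsilon q^{-r_Q/2})$ (the case $c\neq0$). For $r_Q$ odd the exponent coming from Corollary~\ref{pro:main3}(3) contains $\eta(-a(\omega_\rho-\omega_i))=\eta(a)\eta(\omega_i-\omega_\rho)$ (using $\eta(-1)^2=1$), so I would split the $q-1$ values of $a$ according to $\eta(a)=\pm1$, which turns $\prod_{\rho\neq i}\omega_\rho^{(\cdots)}$ into $\prod_{\rho\neq i}\omega_\rho^{k_1}$ or $\prod_{\rho\neq i}\omega_\rho^{k_2}$, each occurring $(q-1)/2$ times; for the Hamming weights one similarly splits the pairs $(a,c)$ with $a,c\neq0$ by the sign of $\eta(ac)=\eta(a)\eta(c)$, each class having size $(q-1)^2/2$, and adds the $q-1$ codewords with $c=0$ (weight $q^{M-1}(q-1)$) to the $q^2(q^{m_2}-1)$ already found. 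Finally, the dimension $m_2+2$ follows from injectivity of $(a,b,c)\mapsto c_{(a,b,c)}$: evaluating at $(x,y)=(0,0)$ forces $c=0$; then $y\mapsto\Tr_{q_2/q}(by)$ vanishing identically forces $b=0$; and $aQ\equiv0$ with $Q\neq0$ forces $a=0$.

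The argument is essentially bookkeeping and presents no serious obstacle; the steps needing care are all in the odd-rank case — correctly pushing the sign $\eta(-a(\beta-c))$ through the multiplicativity of $\eta$ so that it reconciles with the stated $\eta(\omega_i-\omega_\rho)$ and $\eta(ac)$, checking that the frequencies $\tfrac12(q-1)^2$ are integers (true since $q$ is odd), and confirming that the nonzero weights listed in each table are pairwise distinct (and aggregating them in the degenerate small-rank cases such as $r_Q=1$ if two of them happen to coincide), after which one verifies that the frequencies in each table sum to $q^{m_2+2}$.
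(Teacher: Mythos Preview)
Your proposal is correct and follows essentially the same route as the paper: the paper in fact omits the proof of Theorem~\ref{thm:eo1}, stating that it follows directly from Corollary~\ref{pro:main3}, and the detailed argument (present only as a suppressed draft in the source) proceeds exactly as you describe---case split on $(a,b,c)$, read off $N(a,b,c;\beta)$ from Corollary~\ref{pro:main3}, and tally. One cosmetic remark: in the odd-rank case the identity $\eta(-a(\omega_\rho-\omega_i))=\eta(a)\eta(\omega_i-\omega_\rho)$ follows simply from $-a(\omega_\rho-\omega_i)=a(\omega_i-\omega_\rho)$ and multiplicativity of $\eta$, so no appeal to $\eta(-1)^2=1$ is needed.
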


In the following sequel we shall determine the weight hierarchy of $\mathcal{C}_{Q}'$.

\begin{theorem}\label{thm:code2:whh} Assume $Q(x)$ is a quadratic form on $\E/\mathbb{F}_{q}$ and $\mathcal{C}_{Q}'$ defined by \eqref{eq:cq2}. For each $ r $ and $ 1\leq r \leq m_2 + 2$, we have the following.\\
   (1) When $r_{Q}$ is even and $\epsilon = 1$,
    \begin{linenomath}\begin{align*}
    d_r(\mathcal{C}_{Q}') = \begin{cases}
    q^{M - r} \left(q^{r} - 1 - (q - 1) q^{-\frac{r_{Q}}{2}}\right), & 1 \leq r < m_{2} + 2,\\
    q^M, & r = m_{2} + 2.\\
     \end{cases} \end{align*}\end{linenomath}
(2) When $r_{Q}$ is even and $\epsilon = - 1$,
\begin{linenomath}\begin{align*}
    &d_r(\mathcal{C}_{Q}') = \begin{cases}
    q^{M - r} \left(q^{r} - 1 - q^{- \frac{r_{Q}}{2}}\right), & 1 \leq r < m_{2} + 2,\\
    q^M, & r = m_{2} + 2.\\
     \end{cases}
\end{align*}\end{linenomath}
(3) When $r_{Q}$ is odd,
\begin{linenomath}\begin{align*}
    d_r(\mathcal{C}_{Q}') = \begin{cases}
   q^{M - r}\left(q^{r} - 1 - q^{\frac{1 - r_{Q}}{2}}\right), & 1 \leq r < m_{2} + 2,\\
    q^M, & r = m_{2} + 2.\\
     \end{cases}
\end{align*}\end{linenomath}
\end{theorem}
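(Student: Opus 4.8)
The plan is to mimic the argument for $\mathcal{C}_{Q}$ in Theorem~\ref{thm:gwt1}, replacing the space $\mathbb{F}_{q}\times\mathbb{F}_{q_{2}}$ by $\mathbb{F}_{q}\times\mathbb{F}_{q_{2}}\times\mathbb{F}_{q}$ and using Corollary~\ref{pro:main3} in place of Lemma~\ref{lem:main2}. First I would record, exactly as was done before Theorem~\ref{thm:gwt1}, that the map $\psi(a,b,c)=(aQ(x)+\Tr_{q_{2}/q}(by)+c)_{(x,y)\in\F}$ is an $\F_{q}$-linear isomorphism from $\mathbb{F}_{q}\times\mathbb{F}_{q_{2}}\times\mathbb{F}_{q}$ onto $\mathcal{C}_{Q}'$ (this uses that $\dim\mathcal{C}_{Q}'=m_{2}+2$ from Theorem~\ref{thm:eo1}), so that $r$-dimensional subspaces $K\in[\mathcal{C}_{Q}',r]_{q}$ correspond bijectively to $r$-dimensional subspaces $H_{r}'\subseteq\mathbb{F}_{q}\times\mathbb{F}_{q_{2}}\times\mathbb{F}_{q}$, and
\[
d_{r}(\mathcal{C}_{Q}') = q^{M} - \max\{N(H_{r}'): H_{r}'\in[\mathbb{F}_{q}\times\mathbb{F}_{q_{2}}\times\mathbb{F}_{q},r]_{q}\},
\]
where $N(H_{r}')=\#\{(x,y)\in\F: aQ(x)+\Tr_{q_{2}/q}(by)+c=0\ \forall (a,b,c)\in H_{r}'\}$ (note the count is over all of $\F$, and there is no $-1$ term since $\F$, not $\F^{\star}$, is the index set).

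Next I would compute $N(H_{r}')$ by the orthogonality of additive characters, exactly as in Lemma~\ref{lem:gwt2}:
\[
q^{r}N(H_{r}') = \sum_{(a,b,c)\in H_{r}'}\ \sum_{x\in\mathbb{F}_{q_{1}}}\zeta_{p}^{\Tr_{q/p}(aQ(x)+c)}\ \sum_{y\in\mathbb{F}_{q_{2}}}\zeta_{p}^{\Tr_{q_{2}/p}(by)}.
\]
The inner $y$-sum kills every triple with $b\neq0$, leaving a sum over $H_{r}'\cap(\mathbb{F}_{q}\times\{0\}\times\mathbb{F}_{q})$, which is a subspace, say of dimension $r'$. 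On that subspace the term with $a=0$ contributes $q^{m_{1}}\zeta_{p}^{\Tr_{q/p}(c)}$, and the terms with $a\neq0$ are evaluated by Lemma~\ref{lem:main1}: when $r_{Q}$ is odd they vanish after summing over $c$ because the inner $x$-sum carries a factor $\eta(-za)$ and $\sum_{c}\zeta_{p}^{\Tr_{q/p}(zc)}$ over the relevant fibres forces cancellation (here one must be careful to organize the sum by the value of $c$ that each $(a,c)$ pair produces — the relevant distinction is whether the line $\{c=0\}$ lies in the projection of $H_{r}'$), while when $r_{Q}$ is even the inner $x$-sum is the constant $\epsilon q^{m_{1}-r_{Q}/2}$ and the $c$-sum over the subspace is a character sum on an affine/linear piece. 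The upshot will be an expression for $N(H_{r}')$ in terms of $r$, of $\dim\big(H_{r}'\cap(\mathbb{F}_{q}\times\{0\}\times\mathbb{F}_{q})\big)$, and of a $0/1$ (or $0/(q-1)$) indicator recording membership of $(1,0,0)$ and of the "constant line" $(0,0,\F_{q})$ in $H_{r}'$; then $d_{r}(\mathcal{C}_{Q}')$ follows by maximizing $N(H_{r}')$ over $H_{r}'$, which is a purely combinatorial optimization over how much of the two distinguished directions an $r$-space can absorb.

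The main obstacle I anticipate is the bookkeeping in this maximization, especially the boundary case $r=m_{2}+2$. When $r<m_{2}+2$ one can always choose $H_{r}'$ to \emph{avoid} the directions that create a negative correction term (for even $r_{Q}$ with $\epsilon=1$, avoiding them is impossible once $r$ is large, which is why the $-(q-1)q^{-r_{Q}/2}$ term persists; for $\epsilon=-1$ the correction is $+q^{-r_{Q}/2}$ so one wants to \emph{include} the constant direction, giving $-q^{-r_{Q}/2}$; for odd $r_{Q}$ the lone correction is $-q^{(1-r_{Q})/2}$), whereas at $r=m_{2}+2$ the subspace is forced to be everything, the uniform terms dominate, and $N$ collapses so that $d_{m_{2}+2}(\mathcal{C}_{Q}')=q^{M}$ — this last value is immediate since $\Supp(\mathcal{C}_{Q}')$ omits at most the positions where every codeword vanishes, and there are none. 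I would handle the three cases $\epsilon=1$, $\epsilon=-1$ (both $r_{Q}$ even), and $r_{Q}$ odd separately, in each case exhibiting an optimal $H_{r}'$ and checking the matching upper bound on $N(H_{r}')$; the arithmetic simplifications $q^{M-r}(q^{r}-1-\delta)$ are then routine.
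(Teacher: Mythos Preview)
Your overall strategy matches the paper's: reduce to maximizing $N(H_r')$ via the linear isomorphism, expand $N(H_r')$ by orthogonality, kill the $b\neq 0$ terms, apply Lemma~\ref{lem:main1}, and then optimize over $H_r'$. The even-$r_Q$ outline is essentially right, although your description of which directions one wants to \emph{include} versus \emph{avoid} is backwards in places (for $\epsilon=1$ one \emph{wants} $(1,0,0)\in H_r'$ because it makes $N(H_r')$ larger; for $\epsilon=-1$ one wants a diagonal element $(1,0,c)$ with $c\neq 0$, not the constant line $(0,0,\F_q)$, which actually hurts). These are fixable bookkeeping issues.

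The genuine gap is in case (3), odd $r_Q$. Your claim that ``the terms with $a\neq 0$ vanish after summing over $c$'' is false and would give the wrong answer. After the $y$-sum you are left with
\[
q^{m_2}\sum_{(a,0,c)\in H_r'}\zeta_p^{\Tr_{q/p}(c)}\sum_{x\in\F_{q_1}}\zeta_p^{\Tr_{q/p}(aQ(x))},
\]
and for $a\neq 0$ Lemma~\ref{lem:main1} gives a factor $\eta(-a)$, so the contribution of the $a\neq 0$ block is proportional to $\sum_{(a,0,c)\in H_r',\,a\neq 0}\eta(-a)\zeta_p^{\Tr_{q/p}(c)}$. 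This is \emph{not} zero in general: on a one-dimensional $H_r'=\langle(1,0,c_0)\rangle$ with $c_0\neq 0$ it is a Gauss sum of absolute value $\sqrt{q}$. The paper handles this by rescaling over $\F_q^*$ (using that $H_r'$ is an $\F_q$-subspace) to turn the sum into $\frac{1}{q-1}\sum_{ac\neq 0}\eta(ac)$ times a quadratic Gauss sum, which after Lemma~\ref{lem:basic} yields the $q^{(1-r_Q)/2}$ correction you quote. If you drop these terms as you propose, you would get $\max N(H_r')=q^{M-r}$ and hence $d_r=q^{M-r}(q^r-1)$, missing the $-q^{(1-r_Q)/2}$ entirely. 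The optimal $H_r'$ in this case is one containing some $(1,0,c_0)$ with $\epsilon\,\eta(c_0)=1$ and avoiding $(0,0,\F_q^*)$; you need the Gauss-sum computation to see this.
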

\begin{proof} 
Similar to \eqref{eq:gwt1}, we proved that the 
$r$-th generalized Hamming weight of $\mathcal{C}_{Q}'$ is \begin{linenomath}\begin{align}\label{eq:gwt2}
 d_{r}(\mathcal{C}_{Q}') & = q^M - \max\Big\{N(H_r'): H_r'\in [\mathbb{F}_{q} \times \mathbb{F}_{q_{2}} \times \mathbb{F}_{q},r]_{q}\Big\}, 
       \end{align}\end{linenomath}
where $N(H_r')$ is defined by \begin{equation*}N(H_r') = \#\{(x,y) \in \mathbb{F}:aQ(x) + \Tr_{q_2/q}(b y) + c = 0,  \forall (a,b,c) \in H_r' \}.\end{equation*}
Thus it suffices to compute  $N(H_r')$. 

By the orthogonal property of additive characters, we have
\begin{linenomath}\begin{align}\label{eq:gw3}
N(H_r') & = q^{- r} \sum\limits_{(x, y) \in \mathbb{F}} \sum\limits_{(a, b, c) \in H_{r}^{\prime}} \zeta_p^{\Tr_{q/p}\left(a Q(x) + \Tr_{q_{2}/q}(b y) + c\right)}\\ \nonumber
& = q^{- r} \sum\limits_{(a, b, c) \in H_{r}'} \sum\limits_{(x, y) \in \mathbb{F}}\zeta_p^{\Tr_{q/p}\left(a Q(x) + \Tr_{q_{2}/q}(b y) + c\right)}\\ \nonumber
& = q^{- r} \sum\limits_{(a, b, c) \in H_r'}\zeta_p^{\mathrm{Tr}_{q/p}(c)} \sum\limits_{x \in \mathbb{F}_{q_1}} \zeta_p^{\Tr_{q/p}(a Q(x))}\sum\limits_{y \in\mathbb{F}_{q_2}}\zeta_p^{\Tr_{q_{2}/p}(b y)}\\ \nonumber
& = q^{m_{2} - r} \sum\limits_{(a, 0, c) \in H_r'} \zeta_p^{\Tr_{q/p}(c)} \sum\limits_{x \in\mathbb{F}_{q_1}}\zeta_p^{\Tr_{q/p}(a Q(x))}\\ \nonumber
& = q^{m_{2} - r}  \sum\limits_{\substack{(a, 0, c) \in H_{r}'\\ a \ne 0}} \zeta_{p}^{\Tr_{q/p}(c)} \sum\limits_{x \in \mathbb{F}_{q_1}} \zeta_p^{\Tr_{q/p}(a Q(x))} + q^{M - r} \sum\limits_{(0, 0, c) \in H_r'}\zeta_p^{\Tr_{q/p}(c)}.\nonumber
\end{align}\end{linenomath}

(1) When $r_{Q}$ is even and $\epsilon = 1$, from Lemma~\ref{lem:main1}, the equation~\eqref{eq:gw3} becomes
\begin{linenomath}\begin{align*}
N(H_r') &  = q^{M - r - \frac{r_{Q}}{2}} \sum\limits_{\substack{(a, 0, c) \in H_{r}'\\ a \ne 0}} \zeta_{p}^{\Tr_{q/p}(c)} + q^{M - r} \sum\limits_{(0, 0, c) \in H_{r}^{'}} \zeta_{p}^{\Tr_{q/p}(c)}\\
&= q^{M - r - \frac{r_{Q}}{2}}  (\sum\limits_{\substack{(a, 0, 0) \in H_{r}'\\ a \ne 0}} 1 + \sum\limits_{\substack{(a, 0, c) \in H_{r}'\\ ac \ne 0}} \zeta_{p}^{\Tr_{q/p}(c)}) + q^{M - r}(1 + \sum\limits_{\substack{(0, 0, c) \in H_{r}^{'}\\ c\ne 0}} \zeta_{p}^{\Tr_{q/p}(c)})\\
&=q^{M - r - \frac{r_{Q}}{2}}  (\sum\limits_{\substack{(a, 0, 0) \in H_{r}'\\ a \ne 0}} 1 + \frac{1}{q-1}\sum\limits_{\substack{(a, 0, c) \in H_{r}'\\ ac \ne 0}} \sum_{\lambda\in\F_q^*}\zeta_{p}^{\Tr_{q/p}(\lambda c)}) \\
&+ q^{M - r}(1 + \frac{1}{q-1}\sum\limits_{\substack{(0, 0, c) \in H_{r}^{'}\\ c\ne 0}}\sum_{\lambda\in\F_q^*} \zeta_{p}^{\Tr_{q/p}(\lambda c)})\\
&=q^{M - r - \frac{r_{Q}}{2}}  (\sum\limits_{\substack{(a, 0, 0) \in H_{r}'\\ a \ne 0}} 1 - \frac{1}{q-1}\sum\limits_{\substack{(a, 0, c) \in H_{r}'\\ ac \ne 0}} 1) + q^{M - r}(1 - \frac{1}{q-1}\sum\limits_{\substack{(0, 0, c) \in H_{r}^{'}\\ c\ne 0}} 1).
\end{align*}\end{linenomath}
It is easy to see that when $1 \leq r < m_{2} + 2$, $N(H_r')$ could reach its maximum $q^{M - r} \left((q - 1) q^{- \frac{ r_{Q}}{2}} + 1\right)$ and 
$$d_r(\mathcal{C}_{Q}') = q^{M - r} \left(q^r - 1 - (q - 1) q^{-\frac{r_{Q}}{2}}\right).$$
When $r = m_{2} + 2$, $N(H_r')$ only has one value $0$ and 
$$d_r(\mathcal{C}_{Q}') = q^{M}.$$

(2) When $r_{Q}$ is even and $\epsilon = - 1$, from Lemma~\ref{lem:main1} and a similar discussion, the equation~\eqref{eq:gw3} becomes
\begin{linenomath}\begin{align*}
N(H_r') = q^{M - r - \frac{r_{Q}}{2}}  (\frac{1}{q-1}\sum\limits_{\substack{(a, 0, c) \in H_{r}'\\ ac \ne 0}} 1 - \sum\limits_{\substack{(a, 0, 0) \in H_{r}'\\ a \ne 0}} 1) + q^{M - r}(1 - \frac{1}{q-1}\sum\limits_{\substack{(0, 0, c) \in H_{r}^{'}\\ c\ne 0}} 1).
\end{align*}\end{linenomath}
It is easy to see that when $1 \leq r < m_{2} + 2$, $N(H_r')$ could reach its maximum $q^{M - r} \left(1 + q^{- \frac{ r_{Q}}{2}}\right)$ and 
$$d_r(\mathcal{C}_{Q}') = q^{M - r} \left(q^r - 1 -  q^{-\frac{r_{Q}}{2}}\right).$$
When $r = m_{2} + 2$, $N(H_r')$ only has one value $0$ and 
$$d_r(\mathcal{C}_{Q}') = q^{M}.$$

(3) When $r_{Q}$ is odd, from Lemma~\ref{lem:main1} and Lemma~\ref{lem:basic}, the equation~\eqref{eq:gw3} becomes 
\begin{linenomath}\begin{align*}
N(H_r') 
&= \frac{1}{q - 1}q^{M - r} \left(- 1\right)^{m - 1} \varepsilon_{Q} (p^*)^{- \frac{m r_{Q}}{2}} \sum\limits_{\substack{(a, 0, c) \in H_{r}'\\ ac \neq 0}}\eta (- a/c)\sum_{\lambda\in\F_q^*} \zeta_{p}^{\Tr_{q/p} (\lambda c)} \eta (\lambda c)\\
& + q^{M - r} \sum\limits_{(0, 0, c) \in H_{r}'} \zeta_{p}^{\Tr_{q/p}(c)}\\
&=\frac{1}{q - 1}q^{M - r + \frac{1 - r_{Q}}{2}}\epsilon\sum\limits_{\substack{(a, 0, c) \in H_{r}'\\ ac \neq 0}}\eta (ac)+ q^{M - r} \sum\limits_{(0, 0, c) \in H_{r}'} \zeta_{p}^{\Tr_{q/p}(c)}.
\end{align*}\end{linenomath}
When $1 \leq r < m_{2} + 2$, $N(H_r')$ could reach its maximum $q^{M - r + \frac{1 - r_{Q}}{2}} + q^{M - r}$ and 
$$d_r(\mathcal{C}_{Q}') = q^{M - r} \left(q^r - 1 -  q^{\frac{1 - r_{Q}}{2}}\right).$$
When $r = m_{2} + 2$, $N(H_r')$ only has one value $0$ and 
$$d_r(\mathcal{C}_{Q}') = q^{M}.$$

\end{proof}

As special cases of Theorem~\ref{thm:eo1} and Theorem~\ref{thm:code2:whh}, we give the following two examples, which are verified by Magma programs.

\begin{example}\label{ex:353}
 Let $(q, m_{1}, m_{2}) = (3, 5, 3), Q(x) = \Tr_{3^5/3}(2 x^{10} + x^{2})$, we have $r_{Q} = 4$ and $\varepsilon_Q = - 1$. Then, the corresponding code $\mathcal{C}_{Q}'$ is a linear code with parameters $[6561, 5, 4131]_{3}$, has the complete weight enumerator 
              \begin{linenomath}\begin{align*}
&CWE(\mathcal{C}_{Q}') = \omega_0^{6561} + \omega_1^{6561} + \omega_2^{6561} + 234 \omega_0^{2187} \omega_1^{2187} \omega_2^{2187}\\
& + 2 \omega_0^{1701} \omega_1^{2430} \omega_2^{2430} + 2 \omega_0^{2430} \omega_1^{1701} \omega_2^{2430} + 2 \omega_0^{2430} \omega_1^{2430} \omega_2^{1701},
    \end{align*}\end{linenomath}
    and has the weight hierarchy $$d_1(\mathcal{C}_{Q}') = 4131, d_2(\mathcal{C}_{Q}') = 5741, d_3(\mathcal{C}_{Q}') = 6291, d_4(\mathcal{C}_{Q}') = 6471, d_5(\mathcal{C}_{Q}') = 6561.$$
\end{example}


\begin{example} Let $(q, m_{1}, m_{2}) = (3, 3, 4), Q(x) = \Tr_{3^3/3}(\theta x^{2})$, where $\theta$ is a primitive element of $\mathbb{F}_{3^3}$, by Corollary 1 in \cite{TXF2017}, we have $r_{Q} = 3$ and $\varepsilon_Q = - 1$. And assume $\eta(\omega_0) = 0, \eta(\omega_1) = 1, \eta(\omega_2) = - 1$. Then, the corresponding code $\mathcal{C}_{Q}'$ is a linear code with parameters $[2187, 6, 1215]_{3}$, has the complete weight enumerator 
 \begin{linenomath}\begin{align*}
     &CWE(\mathcal{C}_{Q}') = \omega_0^{2187} + \omega_1^{2187} + \omega_2^{2187} + 720 \omega_0^{729} \omega_1^{729} \omega_2^{729}\\
& + \omega_0^{729} \omega_1^{972} \omega_2^{486} + \omega_0^{486} \omega_1^{729} \omega_2^{972} + \omega_0^{972} \omega_1^{486} \omega_2^{729}\\
& + \omega_0^{729} \omega_1^{486} \omega_2^{972} + \omega_0^{972} \omega_1^{729} \omega_2^{486} + \omega_0^{486} \omega_1^{972} \omega_2^{729}, 
 \end{align*}\end{linenomath} 
 and has the weight hierarchy \begin{align*} & d_1(\mathcal{C}_{Q}') = 1215, d_2(\mathcal{C}_{Q}') = 1863, d_3(\mathcal{C}_{Q}') = 2079, d_4(\mathcal{C}_{Q}') = 2151, d_5(\mathcal{C}_{Q}') = 2175,\\ &d_6(\mathcal{C}_{Q}') = 2187.\end{align*}
\end{example}

\section{Descending \texorpdfstring{$\F_q$}.-codes to \texorpdfstring{$\F_p$}.-codes}

Xie et al \cite{XOYM2023} presented a trick to descend an $\mathbb{F}_{q}$-linear code to an $\mathbb{F}_{p}$-linear code, whose weight enumerator is uniquely determined by the old code, and consequently the optimality would be inherited. In this subsection we should consider the descended codes of $\mathcal{C}_{Q}$ and $\mathcal{C}_{Q}'$, and determine their weight hierarchies.

\begin{definition}[{\cite[Definition 5]{XOYM2023}}]
 Assume $q = p^m$ and $N$ is a factor of $p - 1$  prime to $\frac{q - 1}{p - 1}$. Let $\theta$ be a primitive $\frac{q - 1}{N}$-th root of unity in $\mathbb{F}_q$. Define the $\F_p$-linear map
$$
\Psi_N: \mathbb{F}_q \to \mathbb{F}_p^{\frac{q - 1}{N}} = \mathbb{F}_p^{\frac{q - 1}{N} \times 1}, \gamma \mapsto \psi_\gamma = \left(\mathrm{Tr}_{q/p}(\gamma \theta^i)\right)_{0 \leq i < \frac{q - 1}{N}}^{\mathrm{T}}.
$$
The code $\mathcal{C}_N := \mathrm{Im}\Psi_N$.
\end{definition}

By Proposition~3 \cite{XOYM2023}, $\Psi_N$ is injective and $\mathcal{C}_N$ is an $[\frac{q - 1}{N}, m, \frac{(p - 1)p^{m-1}}{N}]$ constant-weight code over $\F_p$.

\begin{lemma} 
Let $H = \Big\langle \theta \Big\rangle$ be the multiplicative subgroup of $\mathbb{F}_{q}^*$ generated by $\theta$ and 
consider the left multiplication action of $\F_p^*$ on the quotient group $\F_q^*/H$, that is,
$$
\lambda \cdot (cH) = (\lambda c)H, \quad \lambda \in \F_p^*.
$$ Then the action is transitive.
\end{lemma}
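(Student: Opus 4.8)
The plan is to recast transitivity as the surjectivity of a single orbit map. Fixing the identity coset $H \in \F_q^*/H$ and noting that $\F_p^*$ acts through multiplication in $\F_q^*$, the action is transitive if and only if the map
\[
\pi \colon \F_p^* \longrightarrow \F_q^*/H, \qquad \lambda \longmapsto \lambda H,
\]
is surjective: if $\pi$ is onto, then for cosets $c_1 H, c_2 H$ we pick $\lambda \in \F_p^*$ with $\lambda H = (c_2 c_1^{-1})H$, so that $\lambda\cdot(c_1 H) = c_2 H$. So I would first reduce the lemma to showing that $\pi$ is surjective.

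Next I would compute the image of $\pi$. It is the composition of the inclusion $\F_p^* \hookrightarrow \F_q^*$ with the canonical projection $\F_q^* \to \F_q^*/H$, hence a group homomorphism with image $\F_p^* H/H \cong \F_p^*/(\F_p^*\cap H)$. Because $\F_q^*$ is cyclic, its subgroups $\F_p^*$ (of order $p-1$) and $H$ (of order $\tfrac{q-1}{N}$, since $\theta$ has order $\tfrac{q-1}{N}$) meet in the unique subgroup of order $\gcd\!\bigl(p-1,\tfrac{q-1}{N}\bigr)$. Therefore $\#\,\mathrm{Im}(\pi) = (p-1)\big/\gcd\!\bigl(p-1,\tfrac{q-1}{N}\bigr)$, while $\#(\F_q^*/H) = N$; so it suffices to prove
\[
\gcd\!\Bigl(p-1,\ \tfrac{q-1}{N}\Bigr) = \tfrac{p-1}{N}.
\]

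The remaining step is elementary arithmetic. Since $N \mid p-1$, the quantity $\tfrac{p-1}{N}$ is a positive integer, and from $q-1 = (p-1)\cdot\tfrac{q-1}{p-1}$ we get $\tfrac{q-1}{N} = \tfrac{p-1}{N}\cdot\tfrac{q-1}{p-1}$. Pulling out the common factor $\tfrac{p-1}{N}$,
\[
\gcd\!\Bigl(p-1,\ \tfrac{q-1}{N}\Bigr) = \tfrac{p-1}{N}\cdot\gcd\!\Bigl(N,\ \tfrac{q-1}{p-1}\Bigr) = \tfrac{p-1}{N},
\]
the last equality being exactly the hypothesis that $N$ is prime to $\tfrac{q-1}{p-1}$. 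Hence $\#\,\mathrm{Im}(\pi) = N = \#(\F_q^*/H)$, so $\pi$ is onto and the action is transitive. I do not anticipate a genuine obstacle here: the one place demanding care is the gcd computation, where the coprimality hypothesis must be invoked precisely after extracting the factor $\tfrac{p-1}{N}$; alternatively one can run the same argument inside $\F_q^* \cong \Z/(q-1)\Z$ via index bookkeeping, which amounts to the same thing.
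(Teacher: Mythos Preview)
Your proof is correct and follows essentially the same approach as the paper: both arguments reduce to showing $|\F_p^* \cap H| = \tfrac{p-1}{N}$ via the identical gcd computation $\gcd\bigl(p-1,\tfrac{q-1}{N}\bigr) = \tfrac{p-1}{N}\cdot\gcd\bigl(N,\tfrac{q-1}{p-1}\bigr) = \tfrac{p-1}{N}$, and then conclude transitivity (you via surjectivity of $\pi$, the paper via orbit--stabilizer, which are equivalent). The only cosmetic difference is that you quote directly the fact that two subgroups of a cyclic group intersect in the subgroup of order equal to the gcd of their orders, whereas the paper verifies this in two steps.
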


\begin{proof}
 It's easy to check that the stabilizer of a coset $cH$ is
$$
\mathrm{Stab}(cH) = \{ \lambda \in \F_p^* : \lambda c H = cH \} = \{ \lambda \in \F_p^* : \lambda \in H \} = \F_p^* \cap H.
$$
Define $d = |\F_p^* \cap H|$, by Lagrange theorem, we know that $d \mid \gcd(|\F_p^*|, |H|)$. By the condition $\gcd(N, \frac{q - 1}{p - 1}) = 1$ we have
\begin{align*}\gcd(|\F_p^*|, |H|) &= \gcd(p - 1, L) = \gcd(N \cdot \frac{p - 1}{N}, \frac{p^m - 1}{p-1} \cdot \frac{p - 1}{N}) \\
&= \frac{p -1}{N} \cdot \gcd(N, M) = \frac{p - 1}{N},\end{align*}
which concludes that $d\mid \frac{p -1}{N}$. Since $\F_q^*$ is a cyclic group of order $q - 1 = \frac{p - 1}{N} \cdot \frac{N(q - 1)}{p - 1}$, it has a unique subgroup $K$ of order $\frac{p - 1}{N}$. For the same reason, we have  $K \subseteq \F_p^* \cap H$
and thus $\frac{p - 1}{N} \mid d$. From the above discussion, we obtain that $d = \frac{p - 1}{N}$. By the orbit formula, we know that the orbit size is 
$$
|\mathrm{Orb}(cH)| = \frac{p - 1}{d} = N.
$$
That means there is only one orbit, and the action is transitive.
\end{proof}

\begin{corollary}\label{cor:le:s1}
For any $c \in \F_q^*$, we have
\begin{align*}
\sum_{\lambda \in \F_p^*} \sum_{i=0}^{\frac{q - 1}{N} - 1} \zeta_p^{\Tr_{q/p}(\lambda c \theta^i)} = -\frac{p - 1}{N}
\end{align*}
and \begin{align*}
\sum_{\lambda \in \F_p^*} \sum_{i=0}^{\frac{q - 1}{N} - 1} \zeta_p^{\Tr_{q/p}(\lambda c \theta^i)}\eta(\lambda a \theta^i)  = \frac{p - 1}{N} \eta(ac)\sum_{\lambda \in \F_q^*}\zeta_p^{\Tr_{q/p}(x)}\eta(x). 
\end{align*}
\end{corollary}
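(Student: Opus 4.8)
The proof rests entirely on the transitivity just established. The plan is to promote it to an exact fibre count for the map
\[
\Phi\colon \F_p^*\times\{0,1,\dots,\tfrac{q-1}{N}-1\}\longrightarrow \F_q^*,\qquad (\lambda,i)\longmapsto \lambda c\theta^i,
\]
and then to reduce both sums to single character sums over $\F_q^*$ via the substitution $x=\lambda c\theta^i$, finishing with the elementary evaluation of $\sum_{x\in\F_q^*}\zeta_p^{\Tr_{q/p}(x)}$.

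First I would prove that $\Phi$ is surjective and that every fibre has exactly $\frac{p-1}{N}$ elements. The image of $\Phi$ is $\F_p^*\cdot cH=\bigcup_{\lambda\in\F_p^*}\lambda cH$; since the $\F_p^*$-action on $\F_q^*/H$ is transitive, this union is the union of all $N$ cosets of $H$, hence all of $\F_q^*$. For a fixed $c'\in\F_q^*$, the assignment $(\lambda,i)\mapsto\lambda$ identifies $\Phi^{-1}(c')$ with $\{\lambda\in\F_p^*:\lambda cH=c'H\}$, because for each such $\lambda$ there is exactly one index $i$ in the range with $\lambda c\theta^i=c'$ (as $\theta^i$ runs bijectively over $H$). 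But $\{\lambda\in\F_p^*:\lambda cH=c'H\}$ is a coset of $\mathrm{Stab}(cH)=\F_p^*\cap H$, whose order was computed to be $\frac{p-1}{N}$ in the preceding lemma; hence every fibre of $\Phi$ has $\frac{p-1}{N}$ elements.

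Granting this, the first identity is immediate:
\[
\sum_{\lambda\in\F_p^*}\sum_{i=0}^{\frac{q-1}{N}-1}\zeta_p^{\Tr_{q/p}(\lambda c\theta^i)}=\frac{p-1}{N}\sum_{x\in\F_q^*}\zeta_p^{\Tr_{q/p}(x)}=-\frac{p-1}{N},
\]
using $\sum_{x\in\F_q}\zeta_p^{\Tr_{q/p}(x)}=0$. For the second identity, the case $a=0$ is trivial since then $\eta(\lambda a\theta^i)=\eta(0)=0$ and both sides vanish, so assume $a\neq0$. Writing $x=\lambda c\theta^i$ gives $\lambda a\theta^i=(a/c)x$, hence $\eta(\lambda a\theta^i)=\eta(a/c)\eta(x)=\eta(ac)\eta(x)$ because $\eta(c)^2=1$; applying the fibre count once more yields
\[
\sum_{\lambda\in\F_p^*}\sum_{i=0}^{\frac{q-1}{N}-1}\zeta_p^{\Tr_{q/p}(\lambda c\theta^i)}\eta(\lambda a\theta^i)=\frac{p-1}{N}\,\eta(ac)\sum_{x\in\F_q^*}\zeta_p^{\Tr_{q/p}(x)}\eta(x),
\]
which is the claimed formula (the sum on the right being taken over $\F_q^*$).

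The only delicate point is the fibre count: one must be certain that transitivity together with $|\F_p^*\cap H|=\frac{p-1}{N}$ really forces $\Phi$ to hit every element of $\F_q^*$ with the \emph{same} multiplicity, and in particular that the inner index $i$ is uniquely determined once $\lambda$ and $c'$ are fixed. This is exactly where the standing hypothesis $\gcd\!\big(N,\tfrac{q-1}{p-1}\big)=1$ enters, through the value of $|\F_p^*\cap H|$ supplied by the previous lemma. Everything afterwards is the substitution above plus the Gauss-type evaluation of $\sum_{x\in\F_q^*}\zeta_p^{\Tr_{q/p}(x)}$, which requires no new ideas.
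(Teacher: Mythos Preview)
Your proof is correct and follows exactly the approach the paper intends: the corollary is placed immediately after the transitivity lemma with no explicit proof, and your fibre-count argument for the map $(\lambda,i)\mapsto\lambda c\theta^i$ is precisely the way to extract both identities from that transitivity together with the stabilizer size $|\F_p^*\cap H|=\tfrac{p-1}{N}$ computed there. The only remark is that the paper also records (in a suppressed passage) an alternative proof of the first identity via the constant-weight property of $\mathcal{C}_N$ from \cite{XOYM2023}, but your route through the fibre count is cleaner and, unlike that alternative, handles the second identity as well.
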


Define the linear codes over $\F_p$ :
\begin{linenomath}\begin{equation}\label{eq:cqn}
\mathcal{C}_{Q,N} := \{(\psi_{c_1}, \psi_{c_2}, \ldots, \psi_{c_n})_{1 \leq i \leq n} \mid (c_1, c_2, \ldots, c_n) \in \mathcal{C}_Q \} \subseteq \F_p^{\frac{q - 1}{N} \times n}, 
\end{equation}\end{linenomath}
\begin{linenomath}\begin{equation}\label{eq:cqn1}
\mathcal{C}_{Q,N}' := \{(\psi_{c_1}, \psi_{c_2}, \ldots, \psi_{c_{n'}})_{1 \leq i \leq {n'}} \mid (c_1, c_2, \ldots, c_{n'}) \in \mathcal{C}_Q' \} \subseteq \F_p^{\frac{q - 1}{N} \times n'},
\end{equation}\end{linenomath}
where $n$ and $n'$ are the length of $\mathcal{C}_Q$ and $\mathcal{C}_Q'$.

By Proposition~3 \cite{XOYM2023} we directly obtain the following proposition.

\begin{proposition}
(1) The code $\mathcal{C}_{Q, N}$ is a $[\frac{(q^M - 1)(q - 1)}{N}, m (m_2 + 1)]_{p}$ linear code
with the weight distribution in Tables 5 and 6.

(2)  The code $\mathcal{C}_{Q, N}'$ is a $[\frac{q^M (q - 1)}{N}, m (m_2 + 2)]_{p}$ linear code
with the weight distribution in Tables 7 and 8.
\end{proposition}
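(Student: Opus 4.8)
The plan is to reduce the statement to the base fact recorded in \cite[Proposition 3]{XOYM2023}: the map $\Psi_N$ is injective, and its image $\mathcal{C}_N$ is a constant-weight $[\frac{q-1}{N},\,m,\,\frac{(p-1)p^{m-1}}{N}]$ code over $\F_p$, so that $\wt(\psi_\gamma)=\frac{(p-1)p^{m-1}}{N}$ for every $\gamma\in\F_q^*$ (this number is an integer since $N\mid p-1$). First I would observe that, by the defining equations \eqref{eq:cqn} and \eqref{eq:cqn1}, applying $\Psi_N$ coordinatewise yields an $\F_p$-linear map $\mathcal{C}_Q\to\mathcal{C}_{Q,N}$ (resp.\ $\mathcal{C}_Q'\to\mathcal{C}_{Q,N}'$), which is injective because $\Psi_N$ is. Hence $\dim_{\F_p}\mathcal{C}_{Q,N}=\dim_{\F_p}\mathcal{C}_Q=m\cdot\dim_{\F_q}\mathcal{C}_Q=m(m_2+1)$ by Theorem~\ref{thm:wd-eo}, and similarly $\dim_{\F_p}\mathcal{C}_{Q,N}'=m(m_2+2)$ by Theorem~\ref{thm:eo1}. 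The lengths are $n\cdot\frac{q-1}{N}$ and $n'\cdot\frac{q-1}{N}$ with $n=q^M-1$ and $n'=q^M$, which equal $\frac{(q^M-1)(q-1)}{N}$ and $\frac{q^M(q-1)}{N}$, as claimed.

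Next I would track weights. If $c=(c_1,\dots,c_n)\in\mathcal{C}_Q$ has Hamming weight $w$ over $\F_q$, then its image $(\psi_{c_1},\dots,\psi_{c_n})$ is a concatenation of $w$ blocks $\psi_{c_i}$ with $c_i\neq0$, each of $\F_p$-Hamming weight $\frac{(p-1)p^{m-1}}{N}$, together with $n-w$ zero blocks; hence its $\F_p$-Hamming weight is $w\cdot\frac{(p-1)p^{m-1}}{N}$. Thus the coordinatewise $\Psi_N$ is an injection that multiplies every codeword weight by the constant $\frac{(p-1)p^{m-1}}{N}$, so the frequency of each weight is preserved: a codeword of weight $\gamma$ in $\mathcal{C}_Q$ corresponds to one of weight $\gamma\cdot\frac{(p-1)p^{m-1}}{N}$ in $\mathcal{C}_{Q,N}$ with the same multiplicity. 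The identical argument applies to $\mathcal{C}_Q'$ and $\mathcal{C}_{Q,N}'$.

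Finally I would substitute the known weight distributions. Scaling every nonzero weight in Table 1 (resp.\ Table 2) by $\frac{(p-1)p^{m-1}}{N}$ while keeping the frequencies unchanged produces Table 5 (resp.\ Table 6) for $\mathcal{C}_{Q,N}$ in the case $r_Q$ even (resp.\ odd); likewise scaling Tables 3 and 4 yields Tables 7 and 8 for $\mathcal{C}_{Q,N}'$. There is no genuine obstacle here: once the constant-weight property of $\Psi_N$ from \cite{XOYM2023} is in place, the proof is just the bookkeeping of the four parity cases ($r_Q$ even or odd, for each of the two codes) together with the common scaling factor.
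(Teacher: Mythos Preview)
Your proposal is correct and follows essentially the same approach as the paper: both reduce the claim to \cite[Proposition~3]{XOYM2023}, with the paper citing that result directly while you spell out the underlying concatenation argument (injectivity of $\Psi_N$ plus the constant-weight property yielding the uniform scaling factor $\frac{(p-1)p^{m-1}}{N}$). The subsequent bookkeeping of scaling Tables~1--4 to obtain Tables~5--8 is exactly what the paper's one-line justification encapsulates.
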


\begin{table}
\centering
\caption{Weight distribution of $\mathcal{C}_{Q, N}$ when $r_{Q}$ is even}
\begin{tabular*}{10.5cm}{@{\extracolsep{\fill}}ll}
\hline
\textrm{Weight} $\gamma$ \qquad& \textrm{Frequency} $A_\gamma$   \\
\hline
$0$ & $1$  \\
$\frac{(p - 1) (q - 1)}{p N} q^{M}$& $q (q^{m_2} - 1)$  \\
$\frac{(p - 1) (q - 1)}{p N} q^{M}\left(1 - \epsilon q^{- \frac{r_{Q}}{2}}\right)$ & $q - 1$  \\
\hline
\end{tabular*}
\end{table}

\begin{table}[ht]
\centering
\caption{Weight distribution of $\mathcal{C}_{Q, N}$ when $r_{Q}$ is odd }
\begin{tabular*}{9cm}{@{\extracolsep{\fill}}ll}
\hline
\textrm{Weight} $\gamma$ & \textrm{Frequency} $A_\gamma$   \\
\hline
$0$ & $1$  \\
$\frac{(p - 1) (q - 1)}{p N} q^{M}$ & $q^{m_2 + 1} - 1$\\
\hline
\end{tabular*}
\end{table}
\begin{table}
\centering
\caption{Weight distribution of $\mathcal{C}_{Q, N}'$ when $r_{Q}$ is even}
\begin{tabular*}{10.5cm}{@{\extracolsep{\fill}}ll}
\hline
\textrm{Weight} $\gamma$ \qquad& \textrm{frequency} $A_\gamma$   \\
\hline
0 & 1  \\
$\frac{p-1}{p N}q^{M+1}$ & $q - 1$  \\
$\frac{(q - 1)(p - 1)}{p N} q^{M} $ & $q^2 (q^{m_2} - 1)$  \\
$\frac{(q - 1)(p - 1)}{p N} q^{M} (1 - \epsilon q^{-\frac{r_{Q}}{2}})$ & $q - 1$  \\
$\frac{p - 1}{p N} q^{M}  (q - 1 + \epsilon q^{- \frac{r_{Q}}{2}})$ & $(q - 1)^2$  \\
\hline
\end{tabular*}
\centering
\caption{Weight distribution of $\mathcal{C}_{Q, N}'$ when $r_{Q}$ is odd}
\begin{tabular*}{10.5cm}{@{\extracolsep{\fill}}ll}
\hline
\textrm{Weight} $\gamma$ \qquad& \textrm{frequency} $A_\gamma$   \\
\hline
$0$ & $1$  \\
$\frac{p-1}{p N}q^{M+1}$ & $q - 1$  \\
$\frac{(q - 1)(p - 1)}{p N} q^{M}$ & $q^2 (q^{m_2} - 1) + q - 1$  \\
$\frac{p - 1}{p N} q^{M} \left(q - 1 - \epsilon q^{\frac{1 - r_{Q}}{2}}\right)$ & $\frac{1}{2} (q - 1)^2$  \\
$\frac{p - 1}{p N} q^{M} \left(q - 1 + \epsilon q^{\frac{1 - r_{Q}}{2}}\right)$ & $\frac{1}{2} (q - 1)^2$  \\
\hline
\end{tabular*}
\end{table}

We now give the weight hierarchies of $\mathcal{C}_{Q,N}$. The following theorem can be obtained by a process similar to Theorem~\ref{thm:gwt1}.

\begin{theorem} Assume $Q(x)$ is a quadratic form on $\E/\mathbb{F}_{q}$ and $\mathcal{C}_{Q, N}$ defined by \eqref{eq:cqn}.
For each $ r $ and $ 1\leq r \leq m(m_2 + 1)$, we have the following.\\
  (1) When $r_{Q}$ is even and $\epsilon = 1$,
    \begin{linenomath}\begin{align*}
    d_r(\mathcal{C}_{Q, N}) =
\begin{cases}
\frac{q^{M} (q - 1)}{p^{r} N} \left(p^{r} - 1\right) \left( 1 - q^{- \frac{r_{Q}}{2}}\right), & r \le m, \\
\frac{q^{M} (q - 1)}{p^{r} N} \left(p^{r} -  1  -  q^{- \frac{r_{Q}}{2}} (q - 1)\right), & r > m.
\end{cases} \end{align*}\end{linenomath}
(2) When $r_{Q}$ is even and $\epsilon = - 1$,
\begin{linenomath}\begin{align*}
    d_r(\mathcal{C}_{Q, N}) = \begin{cases}
\frac{q^{M} (q - 1)}{p^{r} N} \left(p^{r} - 1\right), & r \le m m_{2}, \\
\frac{q^{M} (q - 1)}{p^{r} N} \left(p^{r} -  1 +  q^{- \frac{r_{Q}}{2}} (p^{r - m m_{2}} - 1)\right), & r > m m_{2}.
\end{cases}
\end{align*}\end{linenomath}
(3) When $r_{Q}$ is odd,
\begin{linenomath}\begin{align*}
    d_r(\mathcal{C}_{Q, N}) = \frac{q^{M} (q - 1)}{p^{r} N} \left(p^{r} - 1\right).
\end{align*}\end{linenomath}
\end{theorem}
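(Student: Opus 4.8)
The plan is to mirror the proof of Theorem~\ref{thm:gwt1}, now working over the prime field $\F_p$. Since $\Psi_N$ is injective (Proposition~3 of \cite{XOYM2023}) and the map $(a,b)\mapsto c_{(a,b)}$ is an $\F_q$-linear — hence $\F_p$-linear — isomorphism $\F_q\times\F_{q_2}\to\mathcal{C}_Q$, applying $\Psi_N$ coordinatewise yields an $\F_p$-linear isomorphism $\Phi\colon\F_q\times\F_{q_2}\to\mathcal{C}_{Q,N}$ of $\F_p$-dimension $m(m_2+1)$. Hence the $r$-dimensional $\F_p$-subspaces of $\mathcal{C}_{Q,N}$ are exactly the $\Phi(V_r)$ with $V_r\in[\F_q\times\F_{q_2},r]_p$, and a support count analogous to the one in Section~3 gives
\[
d_r(\mathcal{C}_{Q,N})=\frac{(q^M-1)(q-1)}{N}-\max\Big\{N(V_r):V_r\in[\F_q\times\F_{q_2},r]_p\Big\},
\]
where $N(V_r)=\#\{(x,y,i):(x,y)\in\F^\star,\ 0\le i<\tfrac{q-1}{N},\ \Tr_{q/p}\big((aQ(x)+\Tr_{q_2/q}(by))\theta^i\big)=0\ \text{for all }(a,b)\in V_r\}$. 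Thus everything reduces to evaluating and then maximizing $N(V_r)$.

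For the evaluation I would use orthogonality of additive characters exactly as in Lemma~\ref{lem:gwt2}: expanding and summing first over $y\in\F_{q_2}$ annihilates every term with $b\ne0$, leaving
\[
p^r\Big(N(V_r)+\tfrac{q-1}{N}\Big)=\frac{q-1}{N}q^M+q^{m_2}\sum_{i=0}^{\frac{q-1}{N}-1}\ \sum_{\substack{(a,0)\in V_r\\ a\ne0}}\ \sum_{x\in\F_{q_1}}\zeta_p^{\Tr_{q/p}(\theta^i aQ(x))}.
\]
Lemma~\ref{lem:main1}, applied with $z=\theta^i a\in\F_q^*$, then splits the analysis by the parity of $r_Q$. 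When $r_Q$ is even the inner sum equals $\epsilon q^{m_1-r_Q/2}$, independent of $i$ and $a$, so with $t:=\#\{(a,0)\in V_r:a\ne0\}$ one gets $N(V_r)=\frac{(q-1)q^M}{Np^r}\big(1+\epsilon t q^{-r_Q/2}\big)-\frac{q-1}{N}$. When $r_Q$ is odd the inner sum is a fixed scalar times $\eta(-\theta^i a)$, and the double sum factors as $\big(\sum_i\eta(\theta^i)\big)\big(\sum_{(a,0)\in V_r,a\ne0}\eta(a)\big)$, which I claim is $0$, so that $N(V_r)=\frac{(q-1)q^M}{Np^r}-\frac{q-1}{N}$ for every $V_r$.

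The only genuinely delicate point is this vanishing. Because $N\mid p-1$ with $\gcd\big(N,\tfrac{q-1}{p-1}\big)=1$, the integers $N$ and $m$ are not both even. If $N$ is odd, then $\theta$ has order $\tfrac{q-1}{N}$, which is even, so $\theta$ is a non-square of $\F_q$ and $\sum_i\eta(\theta^i)=\sum_i(-1)^i=0$. If $m$ is odd, then from $(p^m-1)/2\equiv(p-1)/2\pmod{p-1}$ one checks that $\eta$ restricts to the nontrivial quadratic character of $\F_p$ on $\F_p^*$; since $\{(a,0)\in V_r\}$ is an $\F_p$-subspace, hence stable under multiplication by $\F_p^*$, the sum $\sum_{(a,0)\in V_r,a\ne0}\eta(a)$ breaks into $\F_p^*$-orbits each contributing $\eta(a_0)\sum_{\lambda\in\F_p^*}\eta(\lambda)=0$. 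Either case forces the double sum to $0$.

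It remains to maximize $N(V_r)$ over $V_r\in[\F_q\times\F_{q_2},r]_p$. The quantity $t$ equals $p^{d}-1$ with $d=\dim_{\F_p}\big(V_r\cap(\F_q\times\{0\})\big)$, and the elementary dimension inequalities show that $d$ ranges over the integers with $\max(0,r-mm_2)\le d\le\min(r,m)$, both extremes being realized by explicit direct-sum choices of $V_r$. When $r_Q$ is odd, $N(V_r)$ is constant, giving $d_r(\mathcal{C}_{Q,N})=\tfrac{(q-1)q^M}{Np^r}(p^r-1)$. When $r_Q$ is even and $\epsilon=1$, $N(V_r)$ is increasing in $t$, so the maximum is attained at $t=p^{\min(r,m)}-1$, which splits into the cases $r\le m$ and $r>m$; when $r_Q$ is even and $\epsilon=-1$, $N(V_r)$ is decreasing in $t$, so the maximum is attained at $t=p^{\max(0,r-mm_2)}-1$, which splits into $r\le mm_2$ and $r>mm_2$. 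Substituting these values into the displayed formula for $d_r$ and simplifying yields all three parts of the theorem. The main obstacle is the parity bookkeeping behind the odd-rank vanishing above, together with the routine but care-demanding identification of the achievable range of $d$; the rest is the character-sum machinery already developed in Sections~2 and~3.
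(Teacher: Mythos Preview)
Your proposal is correct and follows essentially the same approach as the paper's proof: reduce $d_r(\mathcal{C}_{Q,N})$ to maximizing $N(V_r)$ over $\F_p$-subspaces, evaluate $N(V_r)$ via orthogonality and Lemma~\ref{lem:main1}, and then optimize over the parameter $t=\#\{(a,0)\in V_r:a\ne0\}$. Your treatment of the odd-rank vanishing (the $N$/$m$ parity dichotomy) and of the achievable range $\max(0,r-mm_2)\le d\le\min(r,m)$ for $d=\dim_{\F_p}(V_r\cap(\F_q\times\{0\}))$ is in fact slightly more explicit than the paper's, but the argument is the same.
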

\begin{proof} Similar to  Theorem \ref{thm:gwt1}, we proved that
for $ 1\leq r \leq m (m_2 + 1)$, the $r$-th generalized Hamming weight of $\mathcal{C}_{Q, N}$ satisfies
\begin{linenomath}\begin{align}
 d_{r}(\mathcal{C}_{Q, N}) & = \frac{(q^M - 1) (q - 1)}{N} - \max\Big\{N(V_r): V_r \in [\mathbb{F}_{q} \times \mathbb{F}_{q_{2}}, r]_{p}\Big\},  
       \end{align}\end{linenomath}
where $N(V_r) = \#\{(x, y, i): \Tr_{q/p} \Big( \left(a Q(x) + \Tr_{q_2/q}(b y)\right) \theta^i \Big) = 0, (x, y) \in \mathbb{F}^\star,0 \leq i < \frac{q - 1}{N}, \forall (a, b) \in V_r \}$,
and $[\mathbb{F}_{q} \times \mathbb{F}_{q_{2}}, r]_{p}$ denotes the set of all $r$-dimensional $\F_p$-subspaces of $\mathbb{F}_{q} \times \mathbb{F}_{q_{2}}$. Thus, it suffices to compute $N(V_r)$.

By the orthogonal property of additive characters and Lemma~\ref{lem:main1}, we have
\begin{linenomath}\begin{align}\label{eq:nv}
&p^r \left(N(V_r) + \frac{q - 1}{N}\right)  = \sum\limits_{(x, y) \in \mathbb{F}} \sum\limits_{i = 0}^{\frac{q-1}{N}-1} \sum\limits_{(a, b) \in V_{r}} \zeta_p^{\Tr_{q/p} \big( \left(a Q(x) + \Tr_{q_2/q}(b y)\right) \theta^i \big)}\\\nonumber
& = \sum\limits_{i = 0}^{\frac{q-1}{N}-1}
\sum\limits_{(a, b) \in V_{r}} \sum\limits_{x \in\F_{q_1}} \zeta_p^{\Tr_{q/p} \big( \theta^i a Q(x)\big)} \sum\limits_{y \in \F_{q_2}} \zeta_p^{\Tr_{q_2/p} \big(\theta^i by \big)}  \\ \nonumber
& = \frac{q - 1}{N} q^M + q^{m_2} \sum\limits_{i = 0}^{\frac{q-1}{N}-1} \sum\limits_{\substack{(a, 0) \in V_r\\ a \ne 0}} \sum\limits_{x \in \F_{q_1}} \zeta_p^{\Tr_{q/p} \big( \theta^i a Q(x)\big)}\\
& = \begin{cases}\frac{q - 1}{N} q^M(1 + \epsilon t q^{- \frac{r_{Q}}{2}}), & 2 \mid r_{Q}, \\
\frac{q - 1}{N} q^M + (- 1)^{m - 1} \varepsilon_Q q^{M} (p^*)^{- \frac{m r_{Q}}{2}} \sum\limits_{i = 0}^{\frac{q-1}{N} -1}  \eta(\theta^i) \sum\limits_{\substack{(a, 0) \in V_r\\ a \ne 0}} \eta(a), & 2\nmid r_{Q},
\end{cases}\nonumber
\end{align}\end{linenomath}
where $t = \#\{(a, 0) \in V_r: a \ne 0\}$. Next we discuss case by case according to the parity of $r_Q$.

(1) \ When $r_{Q}$ is even and $\epsilon = 1$, from \eqref{eq:nv} we have \[
N(V_r) = \frac{(q - 1)q^M}{p^rN} \left(1 + t q^{- \frac{ r_{Q}}{2}}\right) - \frac{q - 1}{N}.
\]
 In this case, $N(V_r)$ reaches its maximum value if and only if $t$ is at its maximum.
Obviously, if $1 \leq r \leq m$, 
then $t$ could take its maximum value $p^r -1$ and
$$
d_r(\mathcal{C}_{Q, N}) = \frac{q^{M} (q - 1)}{p^{r} N} \left(p^{r} - 1\right) \left( 1 - q^{- \frac{r_{Q}}{2}}\right).
$$
If $r > m$, 
then $t$ could take its maximum value
$q -1 $ and
$$
d_r(\mathcal{C}_{Q, N}) = \frac{q^{M} (q - 1)}{p^{r} N} \left(p^{r} -  1  -  q^{- \frac{r_{Q}}{2}} (q - 1)\right).
$$

(2) \ When $r_{Q}$ is even and $\epsilon = -1 $, from \eqref{eq:nv} we have \[
N(V_r) = \frac{q - 1}{N} \frac{q^M}{p^r} \left(1 - t q^{- \frac{ r_{Q}}{2}}\right) - \frac{q - 1}{N}.
\]
Hence $N(V_r)$ reaches its maximum value if and only if $t$ is at its minimum.
It's easy to see that if $1 \leq r \leq mm_2$, 
then $t$ could take its minimum value $ 0 $ and 
$$
d_{r}(\mathcal{C}_{Q, N}) = \frac{q^{M} (q - 1)}{p^{r} N} \left(p^{r} - 1\right).
$$
If $r > mm_2$,
then $t$ could take its minimum value $ p^{r-mm_2} - 1 $ and
$$
d_{r}(\mathcal{C}_{Q, N}) = \frac{q^{M} (q - 1)}{p^{r} N} \left(p^{r} -  1 +  q^{- \frac{r_{Q}}{2}} (p^{r - m m_{2}} - 1)\right).
$$

(3) \ When $r_{Q}$ is odd, notice that $N$ is a factor of $p - 1$ prime to $\frac{q - 1}{p - 1}$, hence $N$ and $m$ can not be both even. 
When $N$ is odd, $\theta$ is a non-square element, hence
$\sum\limits_{i = 0}^{\frac{q-1}{N}-1} \eta(\theta^i) = 0$.
When $m$ is odd, any primitive element of $\mathbb{F}_{p}$ is not a square element in $\mathbb{F}_{q}$, hence
$
\sum\limits_{\substack{(a, 0) \in V_r\\ a \ne 0}} \eta(a) = 0$.
Basing on the above discussion and \eqref{eq:nv}, we have 
$$
N(V_r) = \frac{(q - 1) q^M}{N p^r} - \frac{q - 1}{N},
$$ 
and the desired result can be obtained directly.
\end{proof}

Next, we consider the weight hierarchy of $\mathcal{C}_{Q,N}'$. 
\begin{theorem} Assume $Q(x)$ is a quadratic form on $\E/\mathbb{F}_{q}$ with rank $r_{Q}$ and $\mathcal{C}_{Q, N}'$ defined in \eqref{eq:cqn1}. For each $ r $ and $ 1\leq r \leq m(m_2 + 2)$, we
we have the following.

(1) If $r_{Q}$ is even and $\epsilon = 1$, then
\begin{linenomath}\begin{align*}
    &d_{r}(\mathcal{C}_{Q, N}')= \\
    &\begin{cases}
    \frac{q^M (q-1) }{p^r N} (p^{r} - 1)( 1 -  q^{-\frac{r_Q}{2}}), & 1 \leq r \leq m, \\
    \frac{q^M (q-1) }{p^r N} \left( p^r - 1 -  q^{-\frac{r_Q}{2}} (q - 1)\right), & m < r \leq m(m_2+1), \\
    \frac{q^{M} }{p^{r} N}\left(p^{r} (q-1) - (q - p^{r - m (m_{2} + 1)}) (1 + q^{- \frac{r_{Q}}{2}} (q - 1))\right), & r > m (m_2 + 1).
    \end{cases}
\end{align*}\end{linenomath}
 
(2) If $r_{Q}$ is even and $\epsilon = - 1$, then
\begin{linenomath}\begin{align*}
&d_{r}(\mathcal{C}_{Q, N}')=\\
    &\begin{cases}
    \frac{q^M}{p^r N} \left(p^{r} - 1\right)\left(q - 1 - q^{- \frac{r_Q}{2}} \right), & 1 \leq r \leq m, \\
    \frac{q^M (q-1) }{p^r N} \left(p^{r} - 1 - q^{- \frac{r_Q}{2}} \right), & m < r \leq m (m_2 + 1), \\
    \frac{q^M}{p^{r} N} \left(p^{r} ( q - 1 ) - (q - p^{r - m (m_{2} + 1)}) (1 + q^{- \frac{r_{Q}}{2}})\right), & r > m (m_{2} + 1).
    \end{cases}
\end{align*}\end{linenomath}

(3) If $r_{Q}$ is odd, then
\begin{linenomath}\begin{align*}
&d_{r}(\mathcal{C}_{Q, N}')=\\
    &\begin{cases}
    \frac{q^M}{p^r N} \left(p^{r} - 1\right)\left(q - 1 - q^{\frac{1 - r_Q}{2}} \right), & 1 \leq r \leq m, \\
    \frac{q^M (q-1) }{p^r N} \left(p^{r} - 1 - q^{\frac{1 - r_Q}{2}} \right), & m < r \leq m (m_2 + 1), \\
    \frac{q^M}{p^{r} N} \left(p^{r} ( q - 1 ) - (q - p^{r - m (m_{2} + 1)}) (1 + q^{ \frac{1 - r_{Q}}{2}})\right), & r > m (m_{2} + 1).
    \end{cases}
\end{align*}\end{linenomath} 
\end{theorem}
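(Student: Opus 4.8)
The plan is to mirror the treatment of $\mathcal{C}_{Q,N}$: transport the computation of the generalized Hamming weights to $\F_p$-subspaces of $\F_q\times\F_{q_2}\times\F_q$, evaluate a character sum carrying $\theta^i$-twists, and then optimize over subspaces. Since $\Psi_N$ is injective, the map $(a,b,c)\mapsto$ (the corresponding codeword of $\mathcal{C}_{Q,N}'$) is an $\F_p$-linear isomorphism, so $\dim_{\F_p}\mathcal{C}_{Q,N}'=m(m_2+2)$; and exactly as in \eqref{eq:gwt1} and in the treatment of $\mathcal{C}_{Q,N}$, each $r$-dimensional $\F_p$-subspace $V_r'$ of $\F_q\times\F_{q_2}\times\F_q$ gives a subcode $K$ with $\#\Supp(K)=\frac{q^M(q-1)}{N}-N(V_r')$, where
\[
N(V_r')=\#\Big\{\big((x,y),i\big):\Tr_{q/p}\!\big((aQ(x)+\Tr_{q_2/q}(by)+c)\theta^i\big)=0\ \ \forall\,(a,b,c)\in V_r'\Big\},
\]
with $(x,y)\in\F$ and $0\le i<\tfrac{q-1}{N}$. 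Thus $d_r(\mathcal{C}_{Q,N}')=\frac{q^M(q-1)}{N}-\max_{V_r'}N(V_r')$, so the whole task is to compute and maximize $N(V_r')$.

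To evaluate $N(V_r')$ I would use orthogonality of additive characters: summing over $y$ first annihilates every term with $b\ne0$, leaving only the slice $U:=\{(a,c):(a,0,c)\in V_r'\}\subseteq\F_q\times\F_q$, and then I split off $a=0$. The $a=0$ part equals $q^M\sum_{(0,0,c)\in V_r'}\sum_i\zeta_p^{\Tr_{q/p}(c\theta^i)}=q^M\cdot\frac{q-p^{s_0}}{N}$ by Corollary~\ref{cor:le:s1}(1), applied orbit-by-orbit under the $\F_p^{*}$-action, where $p^{s_0}=\#\{c:(0,0,c)\in V_r'\}$. For $a\ne0$ one feeds $\sum_{x\in\E}\zeta_p^{\Tr_{q/p}(\theta^iaQ(x))}$ into Lemma~\ref{lem:main1}. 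When $r_Q$ is even this inner sum is $\epsilon q^{m_1-r_Q/2}$ (independent of $i$ and $a$), and Corollary~\ref{cor:le:s1}(1) then gives, with $t_1=\#\{(a,0,0)\in V_r':a\ne0\}=p^{s_1}-1$ and $t_2=\#\{(a,0,c)\in V_r':ac\ne0\}=p^u-p^{s_0}-p^{s_1}+1$ (here $p^{s_1}=\#\{a:(a,0,0)\in V_r'\}$ and $p^u=\#U$),
\[
N(V_r')=\frac{q^M}{p^rN}\Big[(q-p^{s_0})+\epsilon\,q^{-r_Q/2}\big(q(p^{s_1}-1)+p^{s_0}-p^u\big)\Big].
\]
When $r_Q$ is odd the inner sum is $(-1)^{m-1}\eta(-\theta^ia)\varepsilon_Q q^{m_1}(p^{*})^{-mr_Q/2}$; the piece of the $a\ne0$ part with $c=0$ vanishes because $N$ and $m$ cannot both be even (so that either $\sum_i\eta(\theta^i)=0$ or $\sum_{(a,0,0)\in V_r',\,a\ne0}\eta(a)=0$, exactly as for $\mathcal{C}_{Q,N}$), while the piece with $ac\ne0$ is handled by Corollary~\ref{cor:le:s1}(2) together with Lemma~\ref{lem:basic}; after the simplification $\varepsilon_Q(p^{*})^{-m(r_Q+1)/2}=\epsilon\,q^{-(r_Q+1)/2}$ I expect
\[
N(V_r')=\frac{q^M}{p^rN}\Big[(q-p^{s_0})+\epsilon\,q^{(1-r_Q)/2}\!\!\sum_{(a,0,c)\in V_r'}\!\!\eta(ac)\Big].
\]

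It remains to maximize over $V_r'$. The triples $(s_0,s_1,u)$ realizable by an $r$-dimensional $V_r'$ are precisely those with $\max(0,u-m)\le s_0,s_1\le m$, $s_0+s_1\le u$, and $\max(0,r-m m_2)\le u\le\min(r,2m)$, and every such triple does occur. In the even case the bracket is affine in $p^{s_0},p^{s_1},p^u$, so, after choosing $s_0,s_1$ extremally in terms of $u$, it becomes a function of $u$ alone which (using $p^m=q$) rises up to $u=m$ and falls afterwards; the optimal $u$ is therefore the feasible value nearest to $m$, namely $\min(r,m)$ when $r\le m(m_2+1)$ and $r-m m_2$ when $r>m(m_2+1)$, and substituting yields cases (1) and (2). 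The odd case needs in addition the extremal value of $\sum_{(a,0,c)\in V_r'}\eta(ac)=\sum_{(a,c)\in U}\eta(a)\eta(c)$ for fixed $(s_0,s_1,u)$: when the projection $\pi\colon U\to\F_q$, $(a,c)\mapsto a$, is injective (so $s_0=0$, possible for $u\le m$) a graph $U=\{(a,\delta a):a\in W\}$ with $\eta(\delta)=\epsilon$ attains $\epsilon\sum=p^u-1$, and when $\pi$ is onto $\F_q$ (so $s_0=u-m$) the optimum is $\epsilon\sum=q-p^{s_0}$; inserting these and again taking the best feasible $u$ in the three ranges $r\le m$, $m<r\le m(m_2+1)$, $r>m(m_2+1)$ gives case (3).

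The character-sum step is routine. The main obstacle will be the optimization, and especially its odd-$r_Q$ incarnation: one must pin down the extremal value of the mixed quadratic-character sum over a subspace with prescribed slice dimensions — for which the crude bound $\epsilon\sum\le\#\{(a,c)\in U:ac\ne0\}$ is not tight once $s_0>0$ — and then verify case-by-case that the best feasible $u$ in each $r$-range reproduces exactly the stated $d_r$. This is where the hypothesis $\gcd(N,\tfrac{q-1}{p-1})=1$ (equivalently, $N$ and $m$ not both even) and the identity $p^m=q$ get used repeatedly.
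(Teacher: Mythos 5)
Your route is essentially the paper's own: the same reduction $d_{r}(\mathcal{C}_{Q,N}')=\frac{q^{M}(q-1)}{N}-\max N(V_r')$, the same annihilation of all $b\neq 0$ terms, the same use of Corollary~\ref{cor:le:s1} for the $a=0$ and $ac\neq 0$ pieces (together with Lemma~\ref{lem:main1}, Lemma~\ref{lem:basic}), and your closed formulas for $N(V_r')$ agree exactly with the paper's once one translates notation: your $p^{s_1}-1$, $p^{u}-p^{s_0}-p^{s_1}+1$, $p^{s_0}-1$ are its $t_1,t_2,t_3$, and your $(s_1,s_0,u)$ are its $(k_2-k_1,\,k-k_2,\,k)$. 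For even $r_Q$ your optimization over realizable triples $(s_0,s_1,u)$ (with $\max(0,u-m)\le s_0,s_1$, $s_0+s_1\le u$, $\max(0,r-mm_2)\le u\le\min(r,2m)$) is a cleaner, uniform version of the paper's case-by-case argument; I checked that extremizing $s_0,s_1$ and then taking the feasible $u$ nearest $m$ (respectively $u=r-mm_2$ when $r>m(m_2+1)$) reproduces cases (1) and (2), including the constraint your $s_0,s_1\ge u-m$ that the paper writes as $k_2-k_1\ge k-m$ in its $\epsilon=-1$ subcase. So parts (1) and (2) of your plan are sound and essentially the published proof.

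The genuine gap is the one you flag yourself, in case (3): you exhibit slices attaining $\epsilon\sum_{(a,c)\in U}\eta(ac)=p^{u}-1$ (graphs, $s_0=0$) and $q-p^{s_0}$ (surjective first projection), and these do reproduce the stated formulas, but you never prove that no slice with $s_0>0$ does better. This matters precisely in the ranges $r>m$, where slices with $u>m$ (hence $s_0\ge u-m\ge 1$) compete: the crude bound $\epsilon\sum\le\#\{(a,c)\in U:ac\neq 0\}$ already exceeds the claimed maximum for $r_Q=1$, $u=m+1$, $s_0=s_1=1$ once $q>3$, and even the sharper $|\sum|\le p^{u}-p^{s_0}$ is insufficient when $r_Q=1$. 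What is needed is an honest estimate of the quadratic-character sum over an $\F_p$-subspace with nontrivial vertical section, e.g.\ via $\sum_{(a,c)\in U}\eta(ac)=\sum_{a\in\pi_1(U)}\eta(a)\sum_{c_0\in C_0}\eta(\psi(a)+c_0)$ and a bound on the coset sums, or via the rank of the restricted hyperbolic form $\Tr_{q/p}(tac)$ on $U$; without it, the claimed maxima of $N(V_r')$ for $m<r\le m(m_2+1)$ and $r>m(m_2+1)$ in the odd case are unproved. For what it is worth, the paper is equally thin exactly here: its proof of (3) only exhibits the maximizing subspaces for $r\le m(m_2+1)$ with the words ``could reach its maximum'' and is silent on $r>m(m_2+1)$, so your proposal reaches the same frontier but, as a proof of the stated theorem, stops short at the same point.
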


\begin{proof}  By an argument similar to  Theorem \ref{thm:gwt1}, it can be shown that 
for $ 1\leq r \leq m (m_2 + 1)$, the $r$-th generalized Hamming weight of $\mathcal{C}_{Q, N}'$ is given by
\begin{linenomath}\begin{align}
 d_{r}(\mathcal{C}_{Q, N}') & = \frac{q^M  (q - 1)}{N} - \max\Big\{N(V_r'): V_r' \in [\mathbb{F}_{q} \times \mathbb{F}_{q_{2}} \times \mathbb{F}_{q}, r]_{p}\Big\},  
       \end{align}\end{linenomath}
where $N(V_r') = \#\{(x, y, i): \Tr_{q/p} \Big( \left(a Q(x) + \Tr_{q_2/q}(b y) + c\right) \theta^i \Big) = 0, (x, y) \in \mathbb{F}^\star,0 \leq i < \frac{q - 1}{N}, \forall (a, b,c) \in V_r' \}$,
and $[\mathbb{F}_{q} \times \mathbb{F}_{q_{2}} \times \mathbb{F}_{q}, r]_{p}$ denotes the set of all $r$-dimensional $\F_p$ - subspaces of $\mathbb{F}_{q} \times \mathbb{F}_{q_{2}} \times \mathbb{F}_{q}$. Thus, it suffices to compute $N(V_r')$.

By the orthogonal property of additive characters, we have
\begin{linenomath}\begin{align}\label{eq:nv1}
& p^r N(V_r')  = \sum\limits_{(x, y) \in \mathbb{F}} \sum\limits_{i = 0}^{\frac{q - 1}{N} - 1} \sum\limits_{(a, b, c) \in V_{r}'} \zeta_p^{\Tr_{q/p} \big( \left(a Q(x) + \Tr_{q_2/q}(b y) + c \right) \theta^i \big)}\\\nonumber
& = \sum\limits_{i = 0}^{\frac{q - 1}{N} - 1} \sum\limits_{(a, b, c) \in V_{r}'} \zeta_p^{\Tr_{q/p} ( \theta^i c )} \sum\limits_{x \in\F_{q_1}} \zeta_p^{\Tr_{q/p} \big( \theta^i a Q(x) \big) } \sum\limits_{y \in \F_{q_2}} \zeta_p^{\Tr_{q_2/p} (\theta^i by )}  \\ \nonumber
& = q^{m_2} \sum\limits_{i = 0}^{\frac{q - 1}{N} - 1} \sum\limits_{(a, 0, c) \in V_{r}'} \zeta_p^{\Tr_{q/p} ( \theta^i c )} \sum\limits_{x \in\F_{q_1}} \zeta_p^{\Tr_{q/p} \big( \theta^i a Q(x) \big) }   \\ \nonumber
& = q^{m_2} \sum\limits_{i = 0}^{\frac{q - 1}{N} - 1} \left( \sum\limits_{\substack{(a, 0, c) \in V_{r}'\\a \ne 0}} \zeta_p^{\Tr_{q/p} ( \theta^i c )} \sum\limits_{x \in\F_{q_1}} \zeta_p^{\Tr_{q/p} \big( \theta^i a Q(x) \big) } + q^{m_1} \sum\limits_{(0, 0, c) \in V_{r}'} \zeta_p^{\Tr_{q/p} ( \theta^i c )}  \right)\\ \nonumber
& = q^{m_2} \sum\limits_{i = 0}^{\frac{q - 1}{N} - 1} \sum\limits_{\substack{(a, 0, c) \in V_{r}'\\a \ne 0}} \zeta_p^{\Tr_{q/p} ( \theta^i c )} \sum\limits_{x \in\F_{q_1}} \zeta_p^{\Tr_{q/p} \big( \theta^i a Q(x) \big) } + q^{M} \sum\limits_{i = 0}^{\frac{q - 1}{N} - 1} \sum\limits_{(0, 0, c) \in V_{r}'} \zeta_p^{\Tr_{q/p} ( \theta^i c )}.
\end{align}\end{linenomath}
By Corollary \ref{cor:le:s1}, we have
\begin{linenomath}\begin{align*}
   \sum\limits_{i = 0}^{\frac{q - 1}{N} - 1} \sum\limits_{(0, 0, c) \in V_{r}'} \zeta_p^{\Tr_{q/p} ( \theta^i c )} & = \frac{q - 1}{N} + \sum\limits_{i = 0}^{\frac{q - 1}{N} - 1} \sum\limits_{\substack{(0, 0, c) \in V_{r}'\\ c \ne 0}} \zeta_p^{\Tr_{q/p} ( \theta^i c )}\\
   & = \frac{q - 1}{N} + \frac{1}{p - 1}\sum\limits_{\substack{(0, 0, c) \in V_{r}'\\ c \ne 0}} \sum\limits_{\lambda\in\F_p^*}\sum\limits_{i = 0}^{\frac{q - 1}{N} - 1}\zeta_p^{\Tr_{q/p} ( \theta^i \lambda c )}\\
    & = \frac{q - 1}{N} - \frac{ 1}{ N} \sum\limits_{\substack{(0, 0, c) \in V_{r}'\\ c \ne 0}} 1.
\end{align*}\end{linenomath}


 From the above discussion, the equation~\eqref{eq:nv1} becomes
\begin{linenomath}\begin{align}\label{eq:vr0}
& p^r N(V_r')    
 = q^{m_2} \sum\limits_{i = 0}^{\frac{q - 1}{N} - 1} \sum\limits_{\substack{(a, 0, c) \in V_{r}'\\a \ne 0}} \zeta_p^{\Tr_{q/p} ( \theta^i c )} \sum\limits_{x \in\F_{q_1}} \zeta_p^{\Tr_{q/p} \big( \theta^i a Q(x) \big) } + \frac{q - 1}{N} q^{M} - \frac{1}{N} q^{M} t_{3} \\ \nonumber
\end{align}\end{linenomath}
where $t_3 = \#\{(0, 0, c) \in V_r': c \ne 0\}$. 

(1) When $r_Q$ is even and $\epsilon = 1 $, combining with Lemma~\ref{lem:main1} and Corollary \ref{cor:le:s1}, from \eqref{eq:vr0} we have
\begin{linenomath}\begin{align*}\label{eq:vr1}
N(V_r') &= p^{-r}\Big(q^{M - \frac{r_{Q}}{2}}\sum\limits_{i = 0}^{\frac{q - 1}{N} - 1}  \sum\limits_{\substack{(a, 0, c) \in V_{r}'\\a \ne 0}} \zeta_p^{\Tr_{q/p} ( \theta^i c )} + \frac{q^{M}}{N} (q - 1 - t_{3})\Big)\\
& = p^{-r}\left(q^{M - \frac{r_{Q}}{2}} \Big(\frac{q - 1}{N} t_{1} - \frac{1}{ N} t_{2}\Big) + \frac{q - 1}{N} q^{M} - \frac{1}{N} q^M t_{3}\right)\\
&=\frac{q^M}{p^r N} \left( \big((q - 1) t_{1} - t_{2} \big) q^{- \frac{r_{Q}}{2}} + q - 1  -  t_{3} \right),
\end{align*}\end{linenomath}
where $t_1 = \#\{(a, 0, 0) \in V_r': a \ne 0\}$ and $t_2 = \#\{(a, 0, c) \in V_r': ac \ne 0\}$. 
It's easy to see that when $1 \leq r \leq m(m_2+1)$, $N(V_r')$ reaches its maximum if and only if $t_1$ is at its maximum and $t_2, t_3$ are at their minimum $0$. 
If $1 \leq r \leq m$, then $t_1$ could take its maximum $ p^{r} - 1 $ and 
$$
d_{r}(\mathcal{C}_{Q, N}') = \frac{q^M (q-1) }{p^r N} (p^{r} - 1)( 1 -  q^{-\frac{r_Q}{2}}).
$$
If $m < r \leq m(m_2+1)$, 
then $t_1$ could take its maximum $ q - 1 $ and
$$
d_{r}(\mathcal{C}_{Q, N}') = \frac{q^M (q-1) }{p^r N} \left( p^r - 1 -  q^{-\frac{r_Q}{2}} (q - 1)\right).
$$
If $r > m(m_2 + 1)$, in this case,  we first consider the $k$-dimensional subspace $K$ of $\F_q \times \F_q$, where $k > m$.
Any $k$-dimensional subspace of $\F_q \times \F_q$ can be written in the following form 
$$K = \langle (a_1, c_1), \ldots, (a_{k_1},  c_{k_1}), (a_{k_1 + 1}, 0), \ldots, (a_ {k_2}, 0), (0, c_{k_2 + 1}), \ldots, (0,  c_{k}) \rangle ,$$
where
$\{a_1, \ldots, a_ {k_2}\}$ and  $\{c_1, \ldots, c_{k_1},  c_{k_2+1}, \ldots , c_k\}$ are $\mathbb{F}_{p}$-linear independent.
Then 
$$t_1= p^{k_2 - k_1} -1, t_2 = p^k - p^{k_2 - k_1} - p^{k - k_2} + 1, t_3=p^{k - k_2} - 1.$$
The $t_1,t_2$ and $t_3$ here corresponds to the value of the $t_1,t_2$ and $t_3$ mentioned earlier.
Thus
\begin{align*}
    &\big((q - 1) t_{1} - t_{2} \big) q^{- \frac{r_{Q}}{2}} + q - 1  -  t_{3} \\
    &= q^{1 - \frac{r_Q}{2}} p^{k_2 - k_1} - (1-q^{-\frac{r_Q}{2}}) p^{k - k_2} - p^k q^{-\frac{r_Q}{2}}   - q^{1 - \frac{r_Q}{2}}  + q.
\end{align*}
The above expression reaches its maximum when $k_1= 0$ and $k_2=m$.
Therefore, when 
\[V_r' = \mathbb{F}_{q} \times \mathbb{F}_{q_{2}} \times \Big\langle c_1, c_2, \ldots, c_{r-m(m_2+1)} \Big\rangle ,\] 
where $\{ c_1, c_2, \ldots, c_{r - m(m_2 + 1)} \}$ are $\mathbb{F}_{p}$-linear independent, then 
\[t_1 = q-1, t_2 = (q-1)(p^{r - m(m_2 + 1)} - 1), t_3 = p^{r - m(m_2 + 1)} - 1\] 
and $$
d_{r}(\mathcal{C}_{Q, N}') =  \frac{q^{M} }{p^{r} N}\left(p^{r} (q-1) - (q - p^{r - m (m_{2} + 1)}) (1 + q^{- \frac{r_{Q}}{2}} (q - 1))\right).
$$

\rm{(2)} When $r_Q$ is even and $\epsilon = -1$, combining with Lemma~\ref{lem:main1} and Corollary \ref{cor:le:s1}, from \eqref{eq:vr0} we have
\begin{linenomath}\begin{equation*}\label{eq:vr2}
N(V_r') = 
\frac{q^M}{p^r N} \left( \big(t_{2} - (q - 1) t_{1}\big) q^{- \frac{r_{Q}}{2}} + q - 1  -  t_{3} \right).
\end{equation*}\end{linenomath}
It's easy to see that when $1 \leq r \leq m(m_2+1)$, $N(V_r')$ reaches its maximum value if and only if $t_2$ is at its maximum and $t_1, t_3$ are at their minimum $0$. If $1 \leq r \leq m$, then $t_2$ could take its maximum value $ p^r - 1 $
and 
$$
d_{r}(\mathcal{C}_{Q, N}') = \frac{q^M}{p^r N} \left(p^{r} - 1\right)\left(q - 1 - q^{- \frac{r_Q}{2}} \right).
$$
If $m < r \leq m(m_2+1)$, 
then $t_2$ could take its maximum value $ q - 1 $
and
$$
d_{r}(\mathcal{C}_{Q, N}') =  \frac{q^M (q-1) }{p^r N} \left(p^{r} - 1 - q^{- \frac{r_Q}{2}} \right).
$$
If $r > m(m_2+1)$, then 
using the same symbols as in (1), we could obtain
\begin{align*}
    & \big(t_{2} - (q - 1) t_{1}\big) q^{- \frac{r_{Q}}{2}} + q - 1  -  t_{3} \\
    &= - (1 + q^{-\frac{r_Q}{2}} )p^{k - k_2}  - q^{1 - \frac{r_Q}{2}} p^{k_2 - k_1} + p^k q^{-\frac{r_Q}{2}} + q^{1 - \frac{r_Q}{2}}  + q.
\end{align*}
In fact, according to the construction by $K$, $k_1 \leq k_2 \leq m$ and $k_1 + (k - k_2) \leq m,$ that is $k_2 - k_1 \geq k - m.$
Then the above expression reaches its maximum when $k_1 = 2m-k $ and $k_2 = m$.
Therefore,
when 
\begin{multline*}
V'_r = (\{0\} \times \mathbb{F}_{q_2} \times \{0\}) \oplus \\
(\langle (a_1, 0, c_1), (a_2, 0, c_2), \ldots, (a_m, 0, c_m), (0, 0, c_1), \ldots, (0, 0, c_{r-m(m_2+1)}) \rangle),
\end{multline*}
where 
$\{a_1, a_2, \ldots, a_m\}$ and  $\{c_1, c_2, \ldots, c_m\}$ are both $\mathbb{F}_{p}$-linear independent, then
\[t_1 = p^{r-m(m_2+1)} - 1, t_2 = p^{r- mm_2} - 2p^{r-m(m_2+1)} + 1, t_3 = p^{r-m(m_2+1)} - 1\] 
and
$N(V_r')$ could take its maximum and
$$
d_{r}(\mathcal{C}_{Q, N}') =\frac{q^M}{p^{r} N} \left(p^{r} ( q - 1 ) - (q - p^{r - m (m_{2} + 1)}) (1 + q^{- \frac{r_{Q}}{2}})\right).
$$

(3) When $r_Q$ is odd, combining with Lemma~\ref{lem:basic}, Lemma~\ref{lem:main1}  and Corollary \ref{cor:le:s1}, \eqref{eq:vr0} becomes
\begin{linenomath}\begin{align*}\label{eq:vr1}
 &p^{r}N(V_r') =\frac{q^{M}}{N} (q - 1 - t_{3}) \\
 & + q^{M}(-1)^{m-1}\varepsilon_Q(p^*)^{- \frac{m r_{Q}}{2}}\sum\limits_{i = 0}^{\frac{q - 1}{N} - 1}\left(\sum\limits_{\substack{(a, 0, 0) \in V_{r}'\\a \ne 0}} \eta(- \theta^i a) + \sum\limits_{\substack{(a, 0, c) \in V_{r}'\\ac \ne 0}} \zeta_p^{\Tr_{q/p} ( \theta^i c )}\eta(- \theta^i a)\right)  \\
 &= q^{M}(-1)^{m-1}\varepsilon_Q(p^*)^{- \frac{m r_{Q}}{2}}\sum\limits_{\substack{(a, 0, c) \in V_{r}'\\ac \ne 0}}\eta(- a/c)\sum\limits_{i = 0}^{\frac{q - 1}{N} - 1}  \zeta_p^{\Tr_{q/p} ( \theta^i c )}\eta( \theta^i c) + \frac{q^{M}}{N} (q - 1 - t_{3})\\
 &= \frac{q^{M}}{N}\epsilon q^{\frac{1 - r_{Q}}{2}}\sum\limits_{\substack{(a, 0, c) \in V_{r}'\\ac \ne 0}}\eta( ac) + \frac{q^{M}}{N} (q - 1 - t_{3}).
\end{align*}\end{linenomath}
It's easy to see that when $1 \leq r \leq m$, taking 
$$V_r' = \Big\langle (a_1, 0, ka_1), (a_2, 0, ka_2), \cdots, (a_r, 0, ka_r) \Big\rangle, $$
where $a_1, a_2, \ldots, a_r$ are $\mathbb{F}_{p}$-linear independent and $k\in \F_q^*$  satisfying $\eta(k) = \epsilon$, $N(V_r')$ could reach its maximum $\frac{q^{M}}{p^rN}(p^{r} - 1) q^{\frac{1 - r_{Q}}{2}} + \frac{q^{M}}{p^rN} (q - 1)$ and $$d_{r}(\mathcal{C}_{Q, N}') = \frac{q^M}{p^r N} \left(p^{r} - 1\right)\left(q - 1 - q^{\frac{1 - r_Q}{2}} \right).$$
When $m \leq r \leq m(m_2+1)$, taking 
$$V_r' = 
\Big\langle (a_1, 0, ka_1), (a_2, 0, ka_2), \cdots, (a_m, 0, k a_m), (0, b_1, 0), \cdots (0, b_{r-m}, 0) \Big\rangle, $$
where $a_1, a_2, \ldots, a_m$ and $b_1, b_2, \ldots, b_{r-m}$ are both $\mathbb{F}_{p}$-linear independent and $k\in \F_q^*$  satisfying $\eta(k) = \epsilon$, 
$N(V_r')$ could reach its maximum $\frac{q^{M}}{p^rN}(q - 1) q^{\frac{1 - r_{Q}}{2}} + \frac{q^{M}}{p^rN} (q - 1)$ and $$d_{r}(\mathcal{C}_{Q, N}') = \frac{q^M(q - 1)}{p^r N}\left(p^{r} - 1 - q^{\frac{1 - r_Q}{2}} \right).$$
\end{proof}

\section{Conclusion}

Although the weight distributions of linear codes have been extensively studied, knowledge of their generalized Hamming weight (GHW) remains limited. Following the initial work of Jian et al. \cite{JF17} on GHWs via skew sets, several researchers including Li \cite{LF18,LF21}, Liu et al. \cite{LZW23}, Liu and Wang \cite{LW19}, Li and Li \cite{LL21,LL22dcc,LL22dm}, Lu \cite{LWWZ25}, and Li et al. \cite{LCQ22} have investigated GHWs for specific codes constructed from defining sets associated with quadratic functions, cyclotomic classes, or cryptographic functions. However, there are still relatively few results on the generalized Hamming weights of linear codes derived from the first generic construction.

In this paper, we construct two classes of linear codes from quadratic forms via the first generic construction. These codes are shown to be one-weight, two-weight, or four-weight codes. We completely determine their complete weight enumerators and weight hierarchies. Some of the resulting codes are minimal, making them suitable for constructing secret sharing schemes with interesting access structures. Furthermore, we also determine the weight hierarchies of $\mathcal{C}_{Q, N}$ and $\mathcal{C}_{Q, N}'$, which are derived from $\mathcal{C}_{Q}$ and $\mathcal{C}_{Q}'$.

\end{document}